\newtheorem{theorem}{Theorem}
\newtheorem{lemma}[theorem]{Lemma}
\newtheorem{definition}{Definition}
\newtheorem{proposition}{Proposition}
\newcommand{\ie}{{\em i.e.}}
\newcommand{\myvol}{\ensuremath{\mbox{\it vol}}}
\newcommand{\mydim}{\ensuremath{\mbox{\it dim}}}
\newcommand{\myDist}{\ensuremath{\mbox{\it Dist}}}
\newcommand{\myLL}{\ensuremath{\mbox{\it LL}}}
\newcommand{\myPrev}{\ensuremath{\mbox{\it Prev}}}
\newcommand{\myNext}{\ensuremath{\mbox{\it Next}}}
\newcommand{\myNone}{\ensuremath{\mbox{\it None}}}
\newcommand{\divvol}{\ensuremath{\boxslash}}
\newcommand{\plusvol}{\ensuremath{\boxplus}}
\newcommand{\minusvol}{\ensuremath{\boxminus}}
\newcommand{\timesvol}{\ensuremath{\boxdot}}
\newcommand{\bigplusvol}{\scalebox{1.2}{\plusvol}}
\newcommand{\eventtimes}{\ensuremath{\mathcal{T}}}
\newcommand{\reaches}{\ensuremath{\longrightarrow}}
\newcommand{\myR}{\ensuremath{\mbox{\it Result}}}
\newcommand*{\diff}{\mathop{}\!\mathrm{d}}
\newcommand{\sfp}{shortest fastest path}
\newcommand{\bfrac}[2]{\ensuremath{\frac{\sigma((i,#1),(j,#2),(t,v))}{\sigma((i,#1),(j,#2))}}}
\newcommand{\mysize}{\ensuremath{\mbox{\it size}}}
\newcommand{\mySP}{\ensuremath{\mbox{\it SP}}}
\newcommand{\mySFP}{\ensuremath{\mbox{\it SFP}}}
\newcommand{\latpairset}{\ensuremath{\mbox{\it LP}}}
\newcommand{\intcc}[1]{\ensuremath{ \mathopen{[}#1\mathclose{]} }}
\newcommand{\intco}[1]{\ensuremath{ \mathopen{[}#1\mathclose{[} }}
\newcommand{\intoc}[1]{\ensuremath{ \mathopen{]}#1\mathclose{]} }}
\newcommand{\intoo}[1]{\ensuremath{ \mathopen{]}#1\mathclose{[} }}
\newcommand{\Tfrac}[2]{%
  \ooalign{%
    $\genfrac{}{}{1.6pt}1{\phantom{#1}}{\phantom{#2}}$\cr%
    $\genfrac{}{}{0pt}1{#1}{#2}$\cr%
    $\color{white}\genfrac{}{}{.6pt}1{\phantom{#1}}{\phantom{#2}}$}%
}
\newcommand{\Dfrac}[2]{%
  \ooalign{%
    $\genfrac{}{}{1.8pt}0{\phantom{#1}}{\phantom{#2}}$\cr%
    $\genfrac{}{}{0pt}0{#1}{#2}$\cr%
    $\color{white}\genfrac{}{}{.4pt}0{\phantom{#1}}{\phantom{#2}}$}%
}
\begin{document}

\begin{center}
{\Large \bf
Computing Betweenness Centrality in Link Streams
}\\
\medskip
Fr\'ed\'eric Simard, Cl\'emence Magnien and Matthieu Latapy\,\footnote{Sorbonne Universit\'e, CNRS, LIP6, F-75005 Paris, France -- \url{Matthieu.Latapy@lip6.fr}}
\end{center}

\medskip
\begin{abstract}
Betweeness centrality is one of the most important concepts in graph analysis. It was recently extended to link streams, a graph generalization where links arrive over time. However, its computation raises non-trivial issues, due in particular to the fact that time is considered as continuous. We provide here the first algorithms to compute this generalized betweenness centrality, as well as several companion algorithms that have their own interest. They work in polynomial time and space, we illustrate them on typical examples, and we provide an implementation.
\end{abstract}

%\todo[inline]{utiliser les macros pour les intervalles -- partout}

%%%%%%%%%%%%%%%%%%%%%%%%%%%%%%%%%%%%%%%%
\section{Introduction}
\label{sec:introduction}

Betweenness centrality, or betweenness for short, is one of the most classical and important concepts defined over graphs and used in the field of complex networks and social network analysis \cite{Zweig:2230510,wasserman1994social,Masuda2016Guide,latora_nicosia_russo_2017,Freeman1977betweenness}. Given a graph $G = (V,E)$, it measures how frequently each node $v \in V$ is involved in shortest paths:
$
B(v) = \sum_{u\in V, w\in V} \frac{\sigma(u,w,v)}{\sigma(u,w)}
$
where $\frac{\sigma(u,w,v)}{\sigma(u,w)}$ is the fraction of all shortest paths from $u$ to $w$ that involve $v$ if there is a path from $u$ to $w$, $0$ otherwise. Reference algorithms compute the betweenness of all nodes in a graph in time $O(n\cdot m)$, where $n$ and $m$ are the number of nodes and links in the graph \cite{Brandes2001Faster}.

Betweenness was extended recently to link streams \cite{DBLP:journals/snam/LatapyVM18}, a family of formal objects that model sequences of interactions over time in a way similar to the modeling of relations by graphs.
They are equivalent to other objects like time-varying graphs (TVG) \cite{DBLP:journals/paapp/CasteigtsFQS12,DBLP:journals/snam/BatageljP16}, relational event models (REM) \cite{SOME:SOME203,Stadtfeld2017}, or temporal networks \cite{Nicosia2013Graph,HOLME201297}, with an emphasis on the streaming nature of link sequences. Various temporal extensions of beweenness were introduced in these contexts, see Section~\ref{sec:related}.

Betweenness in link streams has some unique features that make it quite different from other temporal extensions of betweenness in graphs. In particular, it considers continuous time and links with or without durations: nodes may be linked at specific time instants, as well as during continuous periods of time. Also, it considers paths from any node at any time instant to any node at any time instant, which induces an uncountable amount of temporal nodes. This raises specific algorithmic challenges, that we address in this paper, thus obtaining the first algorithm (and implementation) for computing betweenness centrality in link streams.

We first introduce key concepts and notations in Section~\ref{sec:preliminaries}. We then show that betweenness computations involve uncountable sets of paths with a finite volume, that we define and compute in Section~\ref{sec:volumes}. In addition, it involves integrals that must be tranformed into discrete sums over a finite number of time intervals. We define and compute these intervals in Section~\ref{sec:latencies}, and combine them in Section~\ref{sec:contribution} to obtain the contribution of any pair of nodes to the betweenness of a given temporal node. We finally compute the betweenness of any temporal node in polynomial time and space, and show results on non-trivial toy examples in Section~\ref{sec:betweenness}. We provide an open Python implementation of these algorithms \cite{btwurl}.

\section{Preliminaries}
\label{sec:preliminaries}

A link stream $L$ is a triplet $(T,V,E)$ where $T = [\alpha,\omega]$ is an interval of $\mathbb{R}$ representing time, $V$ is a finite set of nodes, and $E \subseteq T\times V\otimes V$ is the set of links\,\footnote{We make the distinction between the set $X \times Y$ of ordered pairs of elements of $X$ and $Y$, that we denote by $(x,y)$ with $x\in X$ and $y\in Y$, and the set $X \otimes Y$ of unordered pairs of distinct elements of $X$ and $Y$, that we denote by $xy$ with $x\in X$, $y\in Y$ and $x \ne y$; $(x,y) \ne (y,x)$ while $xy=yx$.}. Then, $(t,uv) \in E$ means that $u$ and $v$ are linked together at time $t$. For any $u$ and $v$ in $V$, $T_{uv} = \{t, (t,uv) \in E\}$ denotes the set of time instants at which $u$ and $v$ are linked together.
See Figure~\ref{fig:ex-intro} for an illustration and \cite{DBLP:journals/snam/LatapyVM18} for a full presentation of the formalism.

We assume here that $T_{uv}$ is the union of a finite number of disjoint closed intervals (possibly singletons) of $T$. We denote by $\eventtimes$ the set of bounds of maximal intervals in $T_{uv}$ for any $u$ and $v$, that we call event times. We denote by $\overline{m}_{uv}$ the number of maximal intervals in $T_{uv}$, and by $\overline{m} = \sum_{u,v\in V} \overline{m}_{uv}$ their sum, {\em i.e.} the number of maximal intervals in $E$. In the case of Figure~\ref{fig:ex-intro}, we obtain $\eventtimes = \{1,\allowbreak 2,\allowbreak 3,\allowbreak 5,\allowbreak 6,\allowbreak 7,\allowbreak 8,\allowbreak 9,\allowbreak 11,\allowbreak 12,\allowbreak 14,\allowbreak 15,\allowbreak 16,\allowbreak 18,\allowbreak 19,\allowbreak 22,\allowbreak 23,\allowbreak 24,\allowbreak 25,\allowbreak 27,\allowbreak 28,\allowbreak 29,\allowbreak 30,\allowbreak 31\}$, $\overline{m}_{ab} = 3$, $\overline{m}_{ac} = 1$, $\overline{m}_{bc} = 4$, $\overline{m}_{bd} = 1$, $\overline{m}_{cd} = 3$, $\overline{m}_{de} = 4$, and so $\overline{m} = 16$.

Given a link stream $L=(T,V,E)$ and a time $t$, we define the graph $G_t = (V,E_t)$ with $E_t = \{uv, (t,uv)\in E\}$. We denote by $N_t(v)$ the set of neighbors of $v$ in $G_t$. We denote by $\sigma_t(u,v)$ the (finite) number of paths from $u$ to $v$ in $G_t$, and by $d_t(u,v)$ the distance from $u$ to $v$ in this graph.

\begin{figure}[!h]
\centering
\includegraphics[width=.8\textwidth]{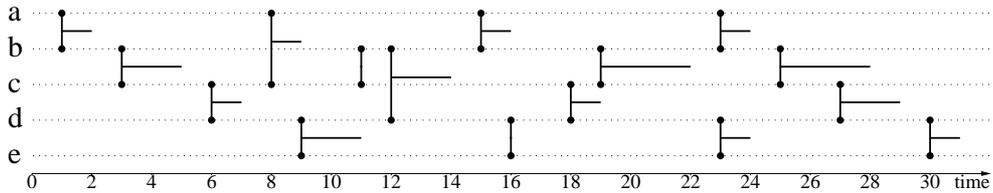}
\caption{
{\bf An example of link stream} $L=(T,V,E)$ with $T=[\alpha,\omega] = [0,32]$, $V=\{a,b,c,d,e\}$, and $E$ defined by $T_{ab} = [1,2] \cup [15,16] \cup [23,24]$, $T_{ac} = [8,9]$, $T_{bc} = [3,5] \cup \{11\} \cup [19,22] \cup [25,28]$, $T_{bd} = [12,14]$, $T_{cd} = [6,7] \cup [18,19] \cup [27,29]$, and $T_{de} = [9,11] \cup \{16\} \cup [23,24] \cup [30,31]$.
}
\label{fig:ex-intro}
\end{figure}

In $L=(T,V,E)$, a path $P$ from $(x,u) \in T\times V$ to $(y,v) \in T\times V$ is a sequence $v_0,t_1,v_1,t_2,v_2,\dots t_k,v_k$ with $v_0=u$, $v_k=v$, $x\le t_1\le t_2 \le \cdots \le t_k \le y$, and $(t_i,v_{i-1}v_i) \in E$ for all $i$. If such a path exists, then $(y,v)$ is reachable from $(x,u)$, which we denote by $(x,u) \reaches (y,v)$.
The path $P$ involves $(t_1,u)$, $(t_k,v)$, and $(t,v_i)$ for all $t \in [t_i,t_{i+1}]$ and all $i$. It starts at $t_1$, arrives at $t_k$, has length $k$ and duration $t_k-t_1$. A path with duration $0$ is called an instantaneous path.

For instance, in the case of Figure~\ref{fig:ex-intro}, the sequences $a,1,b,4,\allowbreak c,6,d,9,e$ and $a,9,c,18,\allowbreak d,27,c,28,d,30,e$ are two paths from $(0,a)$ to $(32,e)$ with length $4$ and duration $8$, and length $5$ and duration $21$, respectively.

The path $P$ is a shortest path from $(x,u)$ to $(y,v)$ if it has minimal length, called the distance from $(x,u)$ to $(y,v)$ and denoted by $d((x,u),(y,v))$. The path $P$ is a fastest path from $(x,u)$ to $(y,v)$ if it has minimal duration, called the latency from $(x,u)$ to $(y,v)$ and denoted by $\ell((x,u),(y,v))$. The path $P$ is a \sfp\ from $(x,u)$ to $(y,v)$ if it is a path of minimum length among those of minimal duration from $(x,u)$ to $(y,v)$.

For instance, in the case of Figure~\ref{fig:ex-intro}, the path $a,2,b,4,c,6,d,9,e$ is a fastest path from $(0,a)$ to $(32,e)$, but $a,1,b,4,\allowbreak c,6,d,9,e$ is not (it has duration $8$). The path $a,2,b,4,c,6,d,9,e$ has length $4$ and duration $7$, and no path from $(0,a)$ to $(32,e)$ with lower duration exists. It is not a shortest path since $a,9,c,18,d,23,e$ also is a path from $(0,a)$ to $(32,e)$ which has length $3$ and duration $14$. This last path is a shortest path, since no path with lower length exists, but not a fastest one. The distance from $(0,a)$ to $(32,e)$ therefore is $3$ and the latency is $7$. Among the fastest paths from $(0,a)$ to $(32,e)$, {\em i.e.} the paths of duration $7$, the shortest have length $4$. Therefore, $a,2,b,4,c,6,d,9,e$ is a \sfp\ between them, as well as $a,2,b,5,c,6,d,9,e$, for instance.

Finally, the betweenness of a node $v\in V$ at a time instant $t\in T$ measures how frequently $(t,v)$ is involved in \sfp s in $L$, see \cite{DBLP:journals/snam/LatapyVM18}:
$$
B(t,v)
= \sum_{u\in V, w\in V} \int_{i \in T, j\in T} \bfrac{u}{w} \diff i \diff j
$$
where $\bfrac{u}{w}$ is the fraction of all \sfp s from $u$ at time $i$ to $w$ at time $j$ that involve $v$ at time $t$ if there is a path from $(i,u)$ to $(j,w)$, $0$ otherwise.

In this original definition, the quantity $\bfrac{u}{w}$ is only loosely defined as a fraction of \sfp s; the function $\sigma$ itself, as well as the ratio between its values, are not explicitely defined. We will see in next section that this fraction involves uncountable sets of \sfp s that have finite volumes with a size and a dimension. We will also introduce the appropriate arithmetic operators needed to deal with them, and an algorithm to compute these volumes.

%%%%%%%%%%%%%%%%%%%%%%%%%%%%%%%%%%%%%%%% 
\section{Volumes of shortest paths}
\label{sec:volumes}

Let us consider a link stream $L=(T,V,E)$, and a sequence $I_1$, $I_2$, $\cdots$, $I_k$ of intervals of $T$.
Let us denote by $b_i$ and $e_i$ the bounds of interval $I_i$, with $b_i \le e_i$. If $e_i=b_i$ then $I_i$ is a singleton ($I_i = \{b_i\} = \{e_i\}$). The intervals may be closed ($I_i = [b_i,e_i]$), half-open ($I_i = ]b_i,e_i]$ or $I_i = [b_i,e_i[$), or open ($I_i = ]b_i,e_i[$).

We say that the sequence $I_1$, $I_2$, $\cdots$, $I_k$ is a {\bf sliding sequence} if for all $i$, there exists no element in $I_{i+1}$ strictly smaller than all elements of $I_i$ ($\nexists y \in I_{i+1}, \forall x \in I_i, y < x$), and no element of $I_i$ strictly larger than all elements of $I_{i+1}$ ($\nexists x \in I_i, \forall y \in I_{i+1}, x > y$).

In such a sequence, the intervals may overlap ($I_i\cap I_j \neq \emptyset$, $i\neq j$), may be included in each other ($I_i \subseteq I_j$, $i\neq j$), or may even be equal ($I_i= I_j$, $i\neq j$).

Given a sliding sequence $I_1$, $I_2$, $\cdots$, $I_k$, we denote by $v_0,I_1,v_1,\allowbreak I_2,v_2,\cdots I_k,v_k$ the set $S$ of all sequences $v_0,t_1,v_1,t_2,v_2,\cdots,t_k,v_k$ such that $v_i \in V$, $t_i\in I_i$ and $t_{i+1}\ge t_i$ for all $i$. We say that $S$ is a {\bf sliding set}. If the intervals are disjoint then $S = \{v_0\}\times I_1\times \{v_1\}\times\allowbreak I_2\times \{v_2\} \times \cdots \times I_k\times \{v_k\}$, but this is not true in general.

In the case of Figure~\ref{fig:ex-intro}, for instance, $[23,24], \intoc{25,28}, [27,29], \{30\}$ is a sliding sequence and $a,[23,24],b,\intoc{25,28},c,[27,29],d,\{30\},e$ is a sliding set. The elements of this set are the the paths $a,t_1,b,t_2,c,t_3,d,t_4,e$ with $23\le t_1\le 24$, $25< t_2\le 28$, $\max(27,t_2)\le t_3\le 29$, and $t_4=30$.

\medskip

More generally, all paths in any link stream are elements of sliding sets. In the case of Figure~\ref{fig:ex-intro}, for instance, all shortest paths from $(0,a)$ to $(14,e)$ go from $a$ to $b$ between times $1$ and $2$, from $b$ to $c$ between times $3$ and $5$, from $c$ to $d$ between $6$ and $7$, and finally from $d$ to $e$ between $9$ and $11$. Therefore, they are elements of 
$a,[1,2],\allowbreak b,[3,5],\allowbreak c,[6,7],\allowbreak d,[9,11],e$.

In addition, if we consider any two elements $(i,u)$ and $(j,v)$ of $T\times V$, then we have the following result.

\begin{proposition}
The set $\mySP((i,u),(j,v))$ of all shortest paths from $(i,u)$ and $(j,v)$ is the disjoint union of a finite number of sliding sets.
\end{proposition}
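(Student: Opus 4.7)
The plan is to partition the set of shortest paths according to two discrete choices: first the sequence of nodes visited, and then for each edge of that sequence, the maximal interval of its activation times from which the transition time is drawn. I will then show that each part, after mild trimming, is a sliding set.

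First I would set $k = d((i,u),(j,v))$ and dispose of the unreachable case (empty union). Any element of $\mySP((i,u),(j,v))$ is a sequence $v_0,t_1,v_1,\ldots,t_k,v_k$ with $v_0=u$, $v_k=v$, $(t_\ell,v_{\ell-1}v_\ell)\in E$, $i\le t_1\le\cdots\le t_k\le j$. For every node sequence $s=(v_0,\ldots,v_k)$ with these endpoints and every assignment $r=(r_1,\ldots,r_k)$ such that $r_\ell$ selects one of the $\overline{m}_{v_{\ell-1}v_\ell}$ maximal closed intervals $J^{(\ell)}_{r_\ell}$ of $T_{v_{\ell-1}v_\ell}$, define $S_{s,r}$ to be the set of shortest paths whose nodes are $s$ and whose transition time at step $\ell$ lies in $J^{(\ell)}_{r_\ell}$. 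Because maximal intervals of a single $T_{uv}$ are pairwise disjoint, any shortest path belongs to exactly one $S_{s,r}$, so these sets partition $\mySP((i,u),(j,v))$. There are at most $|V|^{k-1}\cdot\overline{m}^k$ such pairs $(s,r)$, hence finitely many sets.

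Next I would realize each non-empty $S_{s,r}$ as a sliding set. Start from $I^{(0)}_\ell=J^{(\ell)}_{r_\ell}$, trim to enforce the endpoints by replacing $I^{(0)}_1$ with $I^{(0)}_1\cap[i,+\infty)$ and $I^{(0)}_k$ with $I^{(0)}_k\cap(-\infty,j]$, then perform a forward pass $I^{(1)}_\ell=I^{(0)}_\ell\cap[\inf I^{(1)}_{\ell-1},+\infty)$ and a backward pass $I^{(2)}_\ell=I^{(1)}_\ell\cap(-\infty,\sup I^{(2)}_{\ell+1}]$. The passes make both $\inf I^{(2)}_\ell$ and $\sup I^{(2)}_\ell$ non-decreasing in $\ell$, so $(I^{(2)}_1,\ldots,I^{(2)}_k)$ is a sliding sequence, and each $I^{(2)}_\ell$ remains a (possibly empty) closed interval.

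The main obstacle is verifying that $S_{s,r}$ is exactly the sliding set $v_0,I^{(2)}_1,v_1,\ldots,I^{(2)}_k,v_k$. The inclusion from the sliding set into $S_{s,r}$ is routine since $I^{(2)}_\ell\subseteq J^{(\ell)}_{r_\ell}$ and the trimmings enforce $t_1\ge i$ and $t_k\le j$. For the converse, I would use that any $(t_1,\ldots,t_k)\in S_{s,r}$ satisfies $t_\ell\ge \max_{s\le\ell}\inf I^{(0)}_s$ and $t_\ell\le \min_{s\ge\ell}\sup I^{(0)}_s$ (because the sequence is non-decreasing and each $t_s$ lies in $I^{(0)}_s$), which are precisely the cumulative bounds produced by the forward and backward passes. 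Combined with disjointness of the $S_{s,r}$ already established above, this yields the decomposition of $\mySP((i,u),(j,v))$ into a finite disjoint union of sliding sets.
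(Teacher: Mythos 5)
Your proposal is correct, but it uses a genuinely different decomposition from the paper's. The paper cuts the time axis at event times: it considers all sliding sequences of length $d((i,u),(j,v))$ whose intervals are either singletons $\{t\}$ at event times or open intervals between consecutive event times, and relies on the fact that the graph $G_t$ is constant on each such atom, so any atom-based sliding set containing one shortest path consists entirely of shortest paths. You instead fix the node sequence together with, for each hop, the maximal presence interval of the corresponding link, and then trim these intervals (by $[i,+\infty)$, $(-\infty,j]$, and the forward/backward cumulative bounds) so that each non-empty class $S_{s,r}$ is \emph{exactly} one sliding set. Both decompositions are finite and disjoint, so both prove the statement. What each buys: the paper's finer, event-time-based pieces are precisely the ones reused by the volume computations of Section~\ref{sec:volumes} (Lemmas~\ref{lem:tower} and~\ref{lem:pred}, Algorithm~\ref{alg:vsp}), and its argument needs no trimming; your coarser pieces require the cumulative-bound verification, but in exchange each class is captured by a single canonical sliding set, and the explicit trimming at $i$ and $j$ cleanly handles the case where $i$ or $j$ is not an event time --- a boundary issue the paper's two-line argument glosses over (its atoms may contain sequences starting before $i$ or arriving after $j$; the algorithm later adds $i$ and $j$ as cut points). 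One small point to make explicit in your write-up: non-emptiness of every trimmed interval $I^{(2)}_\ell$ (needed for the sliding sequence to be well defined, and for the recursions computing the cumulative bounds to behave as stated) follows from non-emptiness of $S_{s,r}$ via the same cumulative-bound argument, since the transition times of any path in $S_{s,r}$ lie in the trimmed intervals.
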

\begin{proof}
Let us consider all sliding sequences $I_1$, $I_2$, $\cdots$, $I_k$ with $k = d((i,u),(j,v))$ and $I_i$ is either an open interval $]t,t'[$ such that $t$ and $t'$ are two consecutive event times, or $I_i$ is a singleton $\{t\}$ such that $t$ is an event time. There is a finite number of such sequences, and they induce a finite number of sliding sets which are all disjoint.

Any path in $\mySP((i,u),(j,v))$ is in one of these sliding sets, and then all the elements of this sliding set are shortest paths from $(i,u)$ to $(j,v)$. Therefore $\mySP((i,u),(j,v))$ is the union of such sliding sets.
\end{proof}

\noindent
For instance, let us consider the following sliding sets:\\
$A = a,[1,2],b,[3,5],c,[6,7],d,[9,11],e$;
\allowbreak
$B = a,[8,9],c,\{11\},b,[12,14],d,\{16\},e$;\\
\allowbreak
$C = a,[15,16],b,\{19\},c,\{19\},d,[23,24],e$;
\allowbreak
$D = a,[23,24],b,[25,28],c,[27,29],d,[30,31],e$;\\
\
\allowbreak
$E = a,[1,2],b,[12,14],d,\{16\},e$;
\allowbreak
$F = a,[1,2],b,[12,14],d,\{23\},e$;\\
\allowbreak
$G = a,[1,2],b,[12,14],d,[23,24],e$;
\allowbreak
$H = a,[1,2],b,[12,14],d,[30,31],e$;\\
\allowbreak
\
$I = a,[8,9],c,[18,19],d,\{23\},e$;
\allowbreak
$J = a,[8,9],c,[18,19],d,[23,24],e$.\\
\allowbreak
$K = a,[8,9],c,[18,19],d,[30,31],e$;
\allowbreak
and
\allowbreak
$L = a,[8,9],c,[27,29],d,[30,31],e$.

Then, consider the link stream of Figure~\ref{fig:ex-intro}. There are simple cases where each set of shortest paths corresponds to a unique sliding set, like for instance
$\mySP((0,a),(14,e)) = A$,
$\mySP((4,a),(17,e)) = B$,
$\mySP((12,a),(26,e)) = C$,
$\mySP((20,a),(32,e)) = D$,
or
$\mySP((0,a),(18,e)) = E$.
In most cases, however, the set of shortest paths are disjoint unions (denoted by $\sqcup$) of several sliding sets, like for instance
$\mySP((0,a),(23,e)) = E \sqcup F \sqcup I$,
$\mySP((0,a),(26,e)) = E \sqcup G \sqcup J$,
or
$\mySP((0,a),(32,e)) = E \sqcup G \sqcup H \sqcup J \sqcup K \sqcup L$.

\begin{definition}[volumes]
The volume of a sliding set $S = v_0,I_1,v_1,\allowbreak I_2,v_2,\cdots I_k,v_k$, denoted by $|S|$, is defined by its {\bf size} and {\bf dimension} as follows:
\begin{itemize}
\item
If $I_i$ is a singleton for all $i$, then $S$ contains only one sequence. It has size $1$ and dimension $0$.
\item
Otherwise, let $I'_1, I'_2, \cdots, I'_l$ be the subsequence of $I_1, I_2, \cdots, I_k$ composed of all its intervals that are not singletons, and let $b'_i$ and $e'_i$, $b'_i < e'_i$, denote the bounds of $I'_i$, for all $i$. Then,
$\mysize(S) = \int_{t_1=b'_1}^{e'_1}\allowbreak \int_{t_2=max(t_1,b'_2)}^{e'_2}\dots\allowbreak\int_{t_l=max(t_{l-1},b'_l)}^{e'_l} 1 \diff t_l \dots \diff t_2 \diff t_1$ and
$\mydim(S) = l$.
\end{itemize}
In both cases, the volume of $S$, $|S|$, is defined as the pair $(\mysize(S),\mydim(S))$ giving its size and dimension.
\end{definition}

For instance, the sliding sets above have the following volumes: $|A|=(4,4)$, $|B| = (2,2)$, $|C| = (1,2)$, $|D| = (5.5,4)$, $|E| = (2,2)$, $|F| = (2,2)$, $|G| = (2,3)$, $|H| = (2,3)$, $|I| = (1,2)$, $|J| = (1,3)$, $|K| = (1,3)$, and $|L| = (2,3)$. The case of $D$ is different from the others, as it involves two non-trivially overlapping intervals, namely $[25,28]$ and $[27,29]$. Therefore, $D$ may be written as $D = a,[23,24],b,[25,27],c,[27,29],d,[30,31],e \cup a,[23,24],b,[27,28],c,[27,28],d,[30,31],e \cup a,[23,24],b,[27,28],c,[28,29],d,[30,31],e$. The volume of $D$ is then the sum of volumes of these three sliding sets. The first and last ones have volumes $(4,4)$ and $(1,4)$, respectively. The middle one has volume $(0.5,4)$, since it is the set of all sequences of the form $a,t_1,b,t_2,c,t_3,d,t_4$ with $t_1$ in $[23,24]$, both $t_2$ and $t_3$ in $[27,28]$, and $t_4$ in $[30,31]$, with the constraint that $t_2 \le t_3$.

More generally, we have the following definitions for volume operations.

\begin{definition}[addition, $\plusvol$]
\label{def:plusvol}
Given two disjoint sliding sets $S$ and $S'$ of volume $|S| = (s,d)$ and $|S'| = (s',d')$, the volume of their union $S \sqcup S'$ is the sum of their two volumes, which we denote by $|S| \plusvol |S'|$. In such a sum, volumes in lower dimensions are negligible, and the sizes of volumes with maximal dimension just add up, so we obtain $|S \sqcup S'| = |S| \plusvol |S'| = (s+s',d)$ if $d=d'$, $(s,d)$ if $d>d'$, and $(s',d')$ if $d'>d$. By extension, any disjoint union of a finite number of sliding sets $S_1$, $S_2$, $\cdots$, $S_k$ has dimension equal to the largest dimension of these sets, and size equal to the sum of the size of all these sets of maximal dimension; we denote its volume by $\bigplusvol_{i=1}^k |S_i| = |S_1| \plusvol |S_2| \plusvol \cdots \plusvol |S_k|$.
\end{definition}

\begin{definition}[product, $\timesvol$]
\label{def:timesvol}
Consider three nodes $u$, $v$ and $w$ in $V$, and two sets $S$ and $S'$ such that all elements of $S$ are of the form $u,t_1,v_1,t_2,\cdots,t_k,v$ and the ones of $S'$ are of the form $v,t'_1,v'_1,t'_2,\cdots,t'_l,w$, with $t_k \le t'_1$. We denote by $S \cdot S'$ the set of all sequences $u,t_1,v_1,t_2,\cdots,t_k,v,t'_1,v'_1,t'_2,\cdots,t'_l,w$ such that the sequence from $u$ to $v$ is in $S$ and the one from $v$ to $w$ is in $S'$. If $S$ and $S'$ are disjoint unions of a finite number of sliding sets with $|S| = (s,d)$ and $|S'| = (s',d')$, then $S \cdot S'$ also is the disjoint union of a finite number of sliding sets, and its volume is $|S \cdot S'| = |S| \timesvol |S'| = (s\cdot s', d+d')$.
\end{definition}

\begin{definition}[quotient and difference, $\divvol$, $\Tfrac{\,\,\,\,\,}{}$ and $\minusvol$]
\label{def:divvol}
Consider $S$ and $S'$ two disjoint unions of sliding sets with $|S| = (s,d)$ and $|S'| = (s',d')$, and such that $S' \subseteq S$. Then necessarily $d'\le d$ and the fraction of elements of $S$ that are also in $S'$, which we denote by $|S'| \divvol |S|$ or $\Tfrac{|S'|}{|S|}$, is equal to $0$ if $d>d'$, and to $s'/s$ if $d=d'$. In addition, the set $S \setminus S'$ is a disjoint union of sliding sets, and its volume is $(s,d) \minusvol (s',d')= (s,d) \plusvol (-s',d')$.
\end{definition}

{\bf These notations and operations make it easy to describe the set $\mySP((i,u),(j,v))$ and compute its volume, which is the goal of this section.}

In the non-trivial cases above, for instance,
$|\mySP((0,a),(23,e))| = |E \sqcup F \sqcup I| = |E| \plusvol |F| \plusvol |I| = (2,2) \plusvol (2,2) \plusvol (1,2) = (5,2)$,
$|\mySP((0,a),(26,e))| = |E \sqcup G \sqcup J| = |E| \plusvol |G| \plusvol |J| = (2,2) \plusvol (2,3) \plusvol (1,3) = (3,3)$,
and
$|\mySP((0,a),(32,e))| = |E \sqcup G \sqcup H \sqcup J \sqcup K \sqcup L| = (2,2) \plusvol (2,3) \plusvol (2,3) \plusvol (1,3) \plusvol (1,3) \plusvol (2,3) = (8,3)$.

\medskip

We will now prove two lemmas needed to compute the volume of shortest paths from a given temporal node $(i,u)$ in $T\times V$ to another one $(j,v)$ in $T\times V$.
Lemma~\ref{lem:tower} shows how to compute the volume of shortest paths between two consecutive event times. Lemma~\ref{lem:pred} shows how to decompose the set of shortest paths from a temporal node to another one into a disjoint union of smaller sets of shortest paths.
This will lead to Algorithm~\ref{alg:vsp}, that starts by computing the volume of shortest paths from $(i,u)$ to $(i,w)$ for any $w$. Then, in a temporal BFS-like manner, it uses volumes from $(i,u)$ to $(t,x)$ to compute volumes from $(i,u)$ to $(t',w)$, for increasing pairs of consecutive event times $t$ and $t'$. Indeed, as illustrated in Figure~\ref{fig:vsp-bfs}, the volumes at $t'$ can be derived from the ones at $t$. The temporal BFS also uses two queues, named $Q$ and $X$, to compute the distance that are also needed to compute volumes of shortest paths. It stops when it reaches time $j$.

In all the following, we consider two consecutive event times $t$ and $t'$. For all $x$ and $y$ in $]t,t'[$, the graphs $G_x$ and $G_y$ are identical. We denote by $G^+_t$ (or $G^-_{t'}$) this graph, and by $\sigma^+_t(u,v)$ and $d^+_t(u,v)$ (or $\sigma^-_{t'}(u,v)$ and $d^-_{t'}(u,v)$) the (finite) number of shortest paths and the distance from $u$ to $v$ in this graph.

\begin{lemma}
Given two nodes $x$ and $w$, the volume of the set of shortest paths from $x$ to $w$ that start and arrive during $]t,t'[$ is equal to
$$
\left(\ \sigma^+_t(x,w)\cdot \frac{(t'-t)^{d^+_t(x,w)}}{d^+_t(x,w)!},\ \ \ d^+_t(x,w)\ \right).
$$
\label{lem:tower}
\end{lemma}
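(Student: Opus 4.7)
The plan is to exploit the fact that throughout the open interval $\intoo{t,t'}$ the underlying graph does not change, being equal to $G^+_t$. Hence any path $x=v_0,t_1,v_1,t_2,\ldots,t_k,v_k=w$ with $t_1,t_k\in \intoo{t,t'}$ necessarily has all of its times $t_i$ in $\intoo{t,t'}$ (since times along a path are non-decreasing and bounded between $t_1$ and $t_k$), and therefore each link $(t_i,v_{i-1}v_i)$ belongs to $G^+_t$. So the node sequence $v_0,v_1,\ldots,v_k$ is a static path from $x$ to $w$ in $G^+_t$, and conversely every such static path, endowed with any non-decreasing timestamp sequence in $\intoo{t,t'}$, gives a valid path in the link stream.

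First I would observe that such a path is a shortest path from $(t_1,x)$ to $(t_k,w)$ in the link stream if and only if its length $k$ equals $d^+_t(x,w)$ and the node sequence $v_0,\ldots,v_k$ is a shortest path in $G^+_t$ --- indeed, any temporal path between two points in $\intoo{t,t'}$ yields a static path of the same length in $G^+_t$, and conversely. There are exactly $\sigma^+_t(x,w)$ such static shortest paths. For each one, the set of admissible timestamp sequences is the sliding set
$$
S_{v_0,\ldots,v_k}\;=\;v_0,\intoo{t,t'},v_1,\intoo{t,t'},v_2,\ldots,\intoo{t,t'},v_k,
$$
which has dimension $k=d^+_t(x,w)$ and size
$$
\int_{t_1=t}^{t'}\int_{t_2=t_1}^{t'}\cdots\int_{t_k=t_{k-1}}^{t'} 1\,\diff t_k\cdots\diff t_1\;=\;\frac{(t'-t)^k}{k!},
$$
the standard simplex volume (proved by a straightforward induction on $k$).

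Next I would check that distinct static shortest paths $v_0,\ldots,v_k$ yield disjoint sliding sets $S_{v_0,\ldots,v_k}$, which is immediate since two such sliding sets disagree on at least one node $v_i$. Applying Definition~\ref{def:plusvol} to sum the volumes of these $\sigma^+_t(x,w)$ disjoint sliding sets, all of the same dimension $k$, their sizes simply add up, giving total volume
$$
\Bigl(\sigma^+_t(x,w)\cdot\frac{(t'-t)^{d^+_t(x,w)}}{d^+_t(x,w)!},\;d^+_t(x,w)\Bigr),
$$
as claimed. The only mildly technical step is the simplex integral, which is routine; the rest is an essentially book-keeping argument that time-invariance of $G^+_t$ on $\intoo{t,t'}$ reduces the temporal question to a static one.
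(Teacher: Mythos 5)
Your proof is correct and follows essentially the same route as the paper's: both reduce the temporal question to shortest paths in the static graph $G^+_t$, decompose the path set into $\sigma^+_t(x,w)$ disjoint sliding sets of the form $v_0,\intoo{t,t'},v_1,\ldots,\intoo{t,t'},v_k$, evaluate each by the simplex integral $(t'-t)^k/k!$ (induction on $k$), and sum with the $\plusvol$ operation. No substantive difference.
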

\begin{proof}
First notice that if $v_0,t_1,v_1,t_2,v_2,\dots,t_k,v_k$ is a shortest path from $(t,x)$ to $(t',w)$ in $L$ with $t_i \in ]t,t'[$ for all $i$, then necessarily $v_0,v_1,v_2,\dots,v_k$ is a shortest path from $x$ to $w$ in $G_t^+$. Conversely, if $v_0,v_1,v_2,\dots,v_k$ is a shortest path from $x$ to $w$ in $G_t^+$ then each sequence $v_0,t_1,v_1,t_2,v_2,\dots,t_k,v_k$ with $t_i \in ]t,t'[$ and $t_i \le t_{i+1}$ for all $i$ is a shortest path from $(t,x)$ to $(t',w)$ in $L$.

Therefore, the set of shortest paths from $x$ to $w$ that start and arrive during $]t,t'[$ is the disjoint union of $v_0,]t,t'[,v_1,]t,t'[,\cdots,]t,t'[,v_k$ for all shortest path $v_0,v_1,v_2,\dots,v_k$ from $x$ to $w$ in $G_t^+$, where $v_0=x$, $v_k=w$, and $k=d^+_t(x,w)$.
It is easy to show by induction that the size of each such sliding set is $
\int_{t_1=t}^{t'}\allowbreak \int_{t_2=t_1}^{t'}\dots\allowbreak\int_{t_k=t_{k-1}}^{t'} 1 \diff t_k \dots \diff t_2 \diff t_1
=
\frac{(t'-t)^k}{k!}
$, and its dimension is $k$. The volume of $\mySP((t,x),(t',w))$ is the sum of the volumes of all these sliding sets, and there are $\sigma^+_t(x,w)$ such sliding sets, which completes the proof.
\end{proof}

\begin{lemma}
\label{lem:pred}
Given $(i,u)$ in $T\times V$ and $w$ in $V$, we define the two sets
$X = \{x\in V, d((i,u),(t,x))+d^+_t(x,w) = d((i,u),(t',w)) \}$
and
$Y = \{y\in N_{t'}(w), d((i,u),(t',y))+1 = d((i,u),(t',w)) \}$. 
Then, the volume of $\mySP((i,u),(t',w))$ is the sum of the two following volumes:
$$
\bigplusvol_{x\in X} \left( |\mySP((i,u),(t,x))| \timesvol
\left(\sigma^+_t(x,w)\cdot \frac{(t'-t)^{d^+_t(x,w)}}{d^+_t(x,w)!},d^+_t(x,w)\right)
 \right)
$$
and
$$
\bigplusvol_{y\in Y} |\mySP((i,u),(t',y))|.
$$
\label{lem:vsp-step}
\end{lemma}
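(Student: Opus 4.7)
The plan is to show that every shortest path $P = v_0, t_1, \dots, t_k, v_k$ from $(i,u)$ to $(t',w)$ falls into exactly one of two cases determined by its last hop time $t_k$, and that these cases match the two families on the right of the claim.

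\textbf{Case B}: $t_k = t'$. Then $y := v_{k-1}$ is a neighbor of $w$ in $G_{t'}$ because the edge $yw$ is used at time $t'$. The prefix $v_0, t_1, \dots, t_{k-1}, y$ is a path from $(i,u)$ to $(t', y)$ of length $k-1$; if it were not a shortest such path, replacing it by a shorter one and re-appending the edge $yw$ at time $t'$ would shorten $P$, a contradiction. Hence $d((i,u),(t',y)) = k-1$ and $y \in Y$. Conversely, for any $y \in Y$, any element of $\mySP((i,u),(t',y))$ followed by the hop $yw$ at $t'$ has length $d((i,u),(t',w))$ and so lies in $\mySP((i,u),(t',w))$. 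The trailing sliding set $y,\{t'\},w$ has volume $(1,0)$, the $\timesvol$-neutral element, so each $y$ contributes exactly $|\mySP((i,u),(t',y))|$; distinct $y$'s yield disjoint sets (distinguished by the penultimate node), which validates the $\bigplusvol$ over $Y$.

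\textbf{Case A}: $t_k < t'$. Because $t$ and $t'$ are consecutive event times, every hop of $P$ with $t_j > t$ lies in $]t,t'[$, and every edge used at such times belongs to $G^+_t$. Let $j^\star$ be the largest index with $t_{j^\star} \le t$, or $0$ if none (with the convention $v_0 = u$), and set $x := v_{j^\star}$. Then the prefix $v_0, t_1, \dots, t_{j^\star}, x$ is a path from $(i,u)$ to $(t, x)$ of length $j^\star$, and the suffix $x, t_{j^\star+1}, \dots, t_k, w$ is a path from $x$ to $w$ in $G^+_t$ of length $k - j^\star$ whose hops all lie in $]t, t'[$. A cut-and-paste argument (shortening either piece would shorten $P$) shows that both pieces are shortest in their respective senses, hence $d((i,u),(t,x)) + d^+_t(x,w) = d((i,u),(t',w))$, i.e.\ $x \in X$. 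Conversely, concatenating any element of $\mySP((i,u),(t,x))$ with any shortest $x$-to-$w$ path in $G^+_t$ whose hops lie in $]t,t'[$ yields a shortest path from $(i,u)$ to $(t',w)$ by a length count.

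To conclude, Cases A and B are disjoint (one has $t_k < t'$, the other $t_k = t'$); within Case A the map $P \mapsto v_{j^\star}$ partitions the paths by $x \in X$, and within Case B the map $P \mapsto v_{k-1}$ partitions them by $y \in Y$. For each $x \in X$, Lemma~\ref{lem:tower} evaluates the volume of the in-interval suffix family to $\bigl(\sigma^+_t(x,w)\,(t'-t)^{d^+_t(x,w)}/d^+_t(x,w)!,\ d^+_t(x,w)\bigr)$, which combines with $|\mySP((i,u),(t,x))|$ under $\timesvol$ by Definition~\ref{def:timesvol}. Summing the two families via $\bigplusvol$ (justified by the disjointness just established) produces the claimed formula. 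The main subtlety is to verify that every set involved is actually a disjoint union of sliding sets so that $\plusvol$ and $\timesvol$ apply with the semantics of Definitions~\ref{def:plusvol} and~\ref{def:timesvol}; this is guaranteed by Proposition~1 for the $\mySP$ terms and by the explicit sliding-set decomposition exhibited in the proof of Lemma~\ref{lem:tower} for the suffix family.
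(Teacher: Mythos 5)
Your proof is correct and follows essentially the same route as the paper's: split each shortest path to $(t',w)$ according to whether its last hop occurs at $t'$ (giving the $Y$-term) or strictly before (giving the $X$-term via the last node visited at a time $\le t$), check both inclusions and disjointness, and then apply Lemma~\ref{lem:tower} together with the volume operations $\plusvol$ and $\timesvol$. Your explicit remarks on the fiber maps $P \mapsto v_{j^\star}$, $P \mapsto v_{k-1}$ and on the trailing set $y,\{t'\},w$ having volume $(1,0)$ only make explicit what the paper leaves implicit.
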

\begin{proof}
Let us denote by $A$ the set $A = \sqcup_{x \in X} \mySP((i,u),(t,x)) \cdot \mySP^+((t,x),(t',w))$, where $\mySP^+((t,x),(t',w))$ is the set of shortest paths from $(t,x)$ to $(t',w)$ that start and arrive during $]t,t'[$. Let us denote by $B$ the set $B = \sqcup_{y\in Y} \mySP((i,u),(t',y)) \cdot \{(y,t',w)\}$, which means that $B$ is the set obtained when one concatenates any sequence in $\mySP((i,u),(t',y))$ with $y \in Y$ to the sequence $y,t',w$. By definition of $X$ and $Y$, elements of $A$ and $B$ are shortest paths from $(i,u)$ to $(t',w)$, and $A$ and $B$ are disjoint.

Let us now consider a shortest path $P = v_0,t_1,v_1,t_2,v_2,\cdots,t_k,v_k$ in $\mySP((i,u),(t',w))$, hence $v_0=u$ and $v_k=w$. We show that $P$ is in $A$ or $B$. Indeed, if $t_k = t'$ then $v_{k-1}$ is in $Y$, and $v_0,t_1,v_1,t_2,v_2,\cdots,v_{k-1}$ is in $\mySP((i,u),(t',v_{k-1}))$, which implies that $P$ is in $B$. If instead $t_k<t'$, let $l$ be the largest value such that $t_l \le t$. Then, all $t_j$ with $l<j\le k$ are in $]t,t'[$ and $v_l$ necessarily is in $X$. Therefore, $v_l,t_{l+1},\cdots,t_k,v_k$ necessarily is a shortest path from $(t,v_l)$ to $(t',w)$ that starts and arrives in $]t,t'[$. In addition, $v_0,t_1,v_1,\cdots,t_l,v_l$ is a shortest path from $(i,u)$ to $(t,v_l)$, with $v_l \in X$. Therefore, $P$ is in $A$.

Finally, $\mySP((i,u),(t',w))$ is exactly $A \sqcup B$, which, together with Lemma~\ref{lem:tower} and Definitions~\ref{def:plusvol} and~\ref{def:timesvol} on volume operations, proves the claim.
\end{proof}

\begin{figure}[!h]
\centering
%\includegraphics[angle=180,width=0.45\textwidth]{pred2s.jpg}
%\hfill
%\includegraphics[angle=180,width=0.45\textwidth]{pred1s.jpg}

%\includegraphics[width=0.45\textwidth]{pred2bis_s.jpg}
%\hfill
%\includegraphics[width=0.45\textwidth]{pred1bis_s.jpg}

\ \hfill
\includegraphics[width=0.3\textwidth]{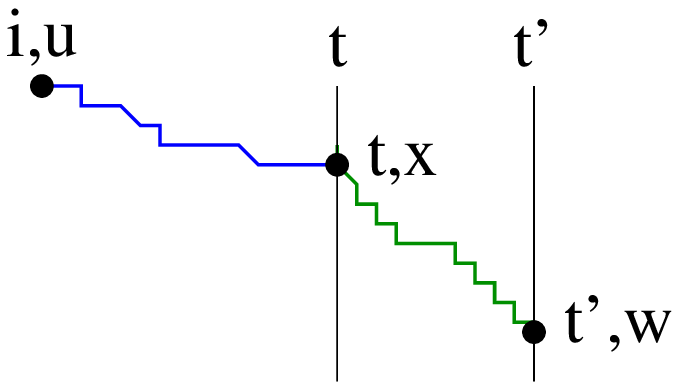}
\hfill
\includegraphics[width=0.3\textwidth]{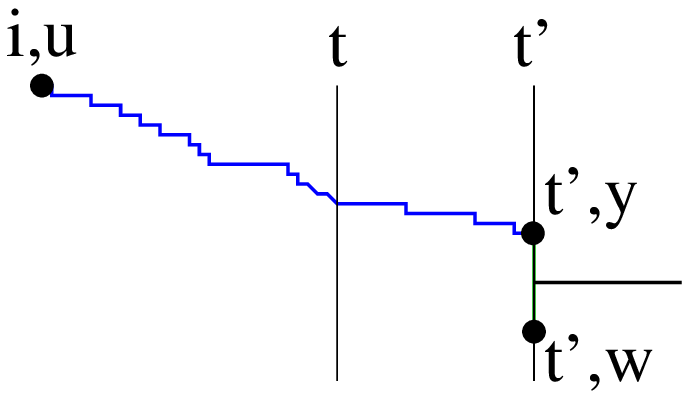}
\hfill\ 

\caption{
If $t$ and $t'$ are two consecutive event times, then a shortest path from $(i,u)$ to $(t',w)$ is the concatenation of either (left) a blue path from $(i,u)$ to a given $(t,x)$ and a green path from this $(t,x)$ to $(t',w)$ in $G^+_t$; or (right) a blue path from $(i,u)$ to a specific $(t',y)$ and then a jump from $y$ to $w$ at time $t'$ using $(t',yw) \in E_{t'}$.
}
\label{fig:vsp-bfs}
\end{figure}

\begin{algorithm}[!h]
\Fn{$\mbox{VSP}$}{
   \KwIn{a link stream $L=(T,V,E)$, $(i,u)\in T \times V$, and $(j,v)\in T \times V$}
   \KwOut{volume of shortest paths from $(i,u)$ to $(j,v)$}
   $\myDist \gets$ Dictionary initialized to $\infty$ for any key\\
   $\myvol \gets$ Dictionary intialized to $(0,0)$ for any key\\
   \For{each $w$ reachable from $u$ in $G_i$\label{alg:vsp:init}}{
    $\myDist[(i,w)] \gets d_i(u,w)$ and $\myvol[(i,w)] \gets (\sigma_i(u,w),0)$
    }

   \For{each $t$, $t'$ consecutive times in $\{i,j\} \cup (\eventtimes \cap [i,j])$ in increasing order \label{alg:vsp:mainloop}}{

    $Q \gets $ empty queue\label{alg:vsp:dist:begin}\\
    set all nodes as unmarked\\
    $X \gets $ list of all $(w,\myDist[(t,w)])$ in increasing order of $\myDist[(t,w)]$\\
    \While{$Q$ or $X$ is not empty}{
     $(w,d) \gets $ get and remove the first element of $Q$ or $X$ with minimal $d$\label{alg:vsp:dist:min}\\
     \lIf{$w$ is unmarked}{$\myDist[(t',w)] \gets d$ and mark $w$\label{alg:vsp:dist:found}}
     \lFor{all unmarked node $y$ in $N_{t'}(w)$}{add $(y,d+1)$ to $Q$\label{alg:vsp:dist:addQ}\label{alg:vsp:dist:end}}
     }

    \For{all marked node $w$ in increasing order of $\myDist[(t',w)]$\label{alg:vsp:vol:begin}}{
     \For{all marked node $x$ such that $\myDist[(t,x)]+d^+_t(x,w) = \myDist[(t',w)]$}{
      $\myvol[(t',w)] \gets \myvol[(t',w)] \plusvol \myvol[(t,x)] \timesvol \left(\sigma^+_t(x,w) \cdot \frac{(t'-t)^{d^+_t(x,w)}}{d^+_t(x,w)!},d^+_t(x,w)\right)$
      }
     \For{all marked node $y$ in $N_{t'}(w)$ such that $\myDist[(t',w)] = \myDist[(t',y)]+1$}{
      $\myvol[(t',w)] \gets \myvol[(t',w)] \plusvol \myvol[(t',y)]$\label{alg:vsp:vol:end}
      }

    }
  }

  \Return{$\myvol[(j,v)]$}
}
\caption{Volume of shortest paths between two temporal nodes.}
\label{alg:vsp}
\end{algorithm}

\begin{theorem}
Given two temporal nodes $(i,u)$ and $(j,v)$ in $T\times V$, Algorithm~\ref{alg:vsp} computes the volume of shortest paths from $(i,u)$ to $(j,v)$.
\end{theorem}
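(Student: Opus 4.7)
The plan is to prove by induction on the finite increasing sequence $\tau_0 = i < \tau_1 < \cdots < \tau_p = j$ of elements of $\{i,j\} \cup (\eventtimes \cap [i,j])$ that after the main-loop iteration handling $(\tau_{k-1},\tau_k)$, the invariants $\myDist[(\tau_k,w)] = d((i,u),(\tau_k,w))$ and $\myvol[(\tau_k,w)] = |\mySP((i,u),(\tau_k,w))|$ hold for every $w \in V$ (with $+\infty$ and $(0,0)$ for unreachable nodes). Reading off $\myvol[(j,v)]$ at the end then yields the claim. For the base case, every shortest path from $(i,u)$ to $(i,w)$ is instantaneous in $G_i$: there are $\sigma_i(u,w)$ such sequences, each forming a singleton sliding set of size $1$ and dimension $0$, so the volume is $(\sigma_i(u,w),0)$ and the distance is $d_i(u,w)$, matching the initialization at line~\ref{alg:vsp:init}.

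For the inductive step I fix consecutive times $t$ and $t'$ and assume the invariants at $t$. The first task is to verify that the two-queue BFS computes $\myDist[(t',w)] = d((i,u),(t',w))$ for every $w$. The crucial auxiliary fact is that $G^+_t$ is a subgraph of $G_{t'}$: if $uv \in G^+_t$ then $]t,t'[ \subseteq T_{uv}$, and since $T_{uv}$ is a union of closed intervals with no event of $uv$ inside $]t,t'[$, the maximal interval of $T_{uv}$ containing this open interval must extend up to $t'$, so $uv \in E_{t'}$. The loop is then multi-source Dijkstra in $G_{t'}$ with initial distances $\myDist[(t,\cdot)]$: the list $X$ supplies the sources in priority order, $Q$ stores the $E_{t'}$-relaxations, and always extracting the smaller head of the two preserves the standard monotonic extraction order. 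Its output equals $\min_{x\in V}(\myDist[(t,x)] + d_{t'}(x,w))$, which coincides with $d((i,u),(t',w))$ via two complementary bounds: achievability, since a shortest prefix to $(t,x)$ followed by a $G_{t'}$-path executed entirely at time $t'$ is a valid link-stream path; and the converse, by splitting any shortest path to $(t',w)$ at its last timestamp in $[i,t]$ exactly as in the proof of Lemma~\ref{lem:pred}, whereupon the suffix uses only edges of $E_{t'}$ (since $G^+_t \subseteq G_{t'}$) and therefore has length at least $d_{t'}(x,w)$.

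Once distances are correct, the two inner volume loops are a direct transcription of Lemma~\ref{lem:pred}: the first aggregates $\bigplusvol_{x \in X} \myvol[(t,x)] \timesvol \left(\sigma^+_t(x,w)\cdot (t'-t)^{d^+_t(x,w)}/d^+_t(x,w)!,\,d^+_t(x,w)\right)$ using Definitions~\ref{def:plusvol} and~\ref{def:timesvol}, and the second aggregates $\bigplusvol_{y \in Y} \myvol[(t',y)]$. The only bookkeeping point is that the $\myvol[(t',y)]$ read by the second loop must already be finalized; this is guaranteed by traversing marked nodes $w$ in increasing order of $\myDist[(t',w)]$, because any contributing $y$ satisfies $\myDist[(t',y)] = \myDist[(t',w)]-1$ and has therefore been processed in an earlier outer iteration. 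Combining the two loops with $\plusvol$ reproduces exactly the right-hand side of Lemma~\ref{lem:pred}.

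The hard part of the plan is the distance argument. The BFS only relaxes along $E_{t'}$-edges, whereas the Lemma~\ref{lem:pred} decomposition naturally describes suffixes that traverse $G^+_t$ throughout $]t,t'[$ and possibly a final $E_{t'}$-jump at $t'$; reconciling the two depends entirely on the inclusion $G^+_t \subseteq G_{t'}$ and on the observation that any $G^+_t$-edge can be executed at time $t'$ without leaving the link stream. Once that bridge is in place, classical Dijkstra correctness on one side and the closed-form arithmetic of Section~\ref{sec:volumes} on the other make the remainder of the induction routine and deliver the theorem.
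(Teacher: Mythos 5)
Your proof is correct and follows essentially the same route as the paper: induction over consecutive times in $\{i,j\}\cup(\eventtimes\cap[i,j])$ with the invariants $\myDist[(t,w)]=d((i,u),(t,w))$ and $\myvol[(t,w)]=|\mySP((i,u),(t,w))|$, correctness of the two-queue relaxation for the distances, and Lemma~\ref{lem:pred} for the volume updates. In fact you make explicit two points the paper leaves implicit (the inclusion $G^+_t\subseteq G_{t'}$ justifying that the suffix of a split path lies in $G_{t'}$, and the processing of nodes in increasing $\myDist[(t',\cdot)]$ order so that the $\myvol[(t',y)]$ terms are already finalized), so nothing further is needed.
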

\begin{proof}
Let us consider any time $t<j$ in $\{i,j\} \cup (\eventtimes \cap [i,j])$ and let $t'$ be the next time in this set. We show below that, if $\myDist[(t,w)] = d((i,u),(t,w))$ and $\myvol[(t,w)] = |\mySP((i,u),(t,w))|$ for all $w$ when one enters the main loop at line~\ref{alg:vsp:mainloop}, then at the end of the loop we have $\myDist[(t',w)] = d((i,u),(t',w))$ and $\myvol[(t',w)] = |\mySP((i,u),(t',w))|$. This is sufficient to prove that the algorithm returns $|\mySP((i,u),(j,v))|$, since the loop at line~\ref{alg:vsp:init} initializes $\myDist$ and $\myvol$ correctly.

Lines~\ref{alg:vsp:dist:begin} to~\ref{alg:vsp:dist:end} deal with the computation of $d((i,u),(t',w))$ from the distances at time $t$, for all $w$. It is similar to a BFS on the graph $G_{t'}$, except that distances at $t'$ are bounded by the ones at $t$: $d((i,u),(t',w)) \le d((i,u),(t,w))$. The loop therefore uses two queues: a list $X$ of nodes in increasing distance at time $t$, and a queue $Q$ for the exploration of $G_{t'}$. At each round, we consider a node $w$ with minimal distance in these queues: Line~\ref{alg:vsp:dist:min} takes the first element of $X$ or $Q$, depending on which has the minimal second field $d$. This is its actual distance $d((i,u),(t',w))$ (line~\ref{alg:vsp:dist:found}). Then we add its neighbors to $Q$, together with the information that their distance from $(i,u)$ cannot be larger than $d((i,u),(t',w)) + 1$ (line~\ref{alg:vsp:dist:addQ}). The loop ends when both $X$ and $Q$ are empty, {\em i.e.} the distances to all reachable nodes are found.

Then, Lines~\ref{alg:vsp:vol:begin} to~\ref{alg:vsp:vol:end} deal with the computation of $|\mySP((i,u),(t',w))|$ from the volumes at time $t$, for all reachable $w$. They are a straightforward application of Lemma~\ref{lem:pred}.
\end{proof}

%\clearpage
%%%%%%%%%%%%%%%%%%%%%%%%%%%%%%%%%%%%%
\section{Latency pairs}
\label{sec:latencies}

Let us consider a link stream $L=(T,V,E)$, and two nodes $u$ and $w$ in $V$.
The previous section shows how to compute the volume of shortest paths from $u$ to $w$ between two given time instants $i$ and $j$. However, betweenness computations rely on volumes of shortest {\em fastest} paths from $u$ to $w$. These paths are the shortest paths from $(s,u)$ to $(a,w)$ if the latency from $(s,u)$ to $(a,w)$ is equal to $a-s$. We then say that  $(s,a)$ in $T\times T$ is a {\bf latency pair} from $u$ to $w$ (in $L$). This section is devoted to the computation of such latency pairs.

%A {\bf latency pair} from $u$ to $w$ (in $L$) is an ordered pair $(s,a)$ in $T\times T$ such that the latency from $(s,u)$ to $(a,w)$ is equal to $a-s$.

In the case of Figure~\ref{fig:ex-intro}, for instance, $(2,9)$ is a latency pair from $a$ to $e$, because the fastest paths from $(2,a)$ to $(9,e)$ start at $2$ and end at $9$. Similarly, $(9,16)$, $(16,23)$ and $(24,30)$ are the other latency pairs from $a$ to $e$. Instead, $(3,8)$ is not a latency pair from $a$ to $e$ since there is no path from $(3,a)$ to $(8,e)$, and $(1,9)$ is not a latency pair from $a$ to $e$ either because the fastest paths from $(1,a)$ to $(9,e)$ start at time $2$.

For any $t$ in $T$, the pair $(t,t)$ is a latency pair from $u$ to $w$ exactly if there is an instantaneous path between $(t,u)$ and $(t,w)$, {\em i.e.} there is a path between $u$ and $w$ in $G_t$. The latency between $(t,u)$ and $(t,w)$ is then equal to $0$, and we call $(t,t)$ an {\bf instantaneous latency pair}. In the case of Figure~\ref{fig:ex-intro}, such latency pairs occur from $b$ to $d$ at all times from $12$ to $14$, at time $19$, and at all times from $27$ to $28$.

Notice that there may exist an infinite amount of instantaneous latency pairs from a node to another one, like in this last example, but there is only a finite number of non-instantaneous latency pairs. Indeed, if $(s,a)$ is a latency pair with $a-s\ne 0$, then $s$ and $a$ necessarily are event times, and as said in Section~\ref{sec:preliminaries} all link streams considered here have a finite number of event times.

Notice also that if $(s,a)$ is a latency pair from $u$ to $w$, then there cannot be any latency pair $(s',a')$ from $u$ to $w$ with $[s',a'] \subsetneq [s,a]$. Indeed, this would imply that the latency from $(s,u)$ to $(a,w)$ is equal to $s'-a' < s-a$, which contradicts the fact that $(s,a)$ is a latency pair. This also implies that, if $(s,a)$ is a latency pair with $s\ne a$, then necessarily $s$ and $a$ are event times: otherwise, there is a pair $(s',a')$ such that $[s',a'] \subsetneq [s,a]$, with $G_{s'} = G_s$ and $G_{a'} = G_a$, which would imply that $(s',a')$ also is a latency pair, which contradicts our previous remark.

As a consequence, latency pairs are componentwise ordered: if $(s,a)$ and $(s',a')$ are two distinct latency pairs, then $\intcc{s',a'} \not\subseteq \intcc{s,a}$ and $\intcc{s,a} \not\subseteq \intcc{s',a'}$. Therefore, either $s<s'$ and $a<a'$, or $s'<s$ and $a'<a$.

In this section, we compute the {\bf latency list} from $u$ to $w$, defined as the (finite) componentwise ordered list of all latency pairs $(s,a)$ such that $s$ and $a$ are event times.
For instance, in the case of Figure~\ref{fig:ex-intro}, the latency list from $a$ to $e$ is $(2,9),(9,16),(16,23),(24,30)$, and the latency list from $b$ to $d$ is $(5,6),(12,12),(14,14),(19,19),(27,27),(28,28)$.

Our algorithm considers all event times in increasing order. It maintains the latency lists from a given node to all others before the current event time. It then updates these latency lists for the current time by computing the connected components of the graph at this time. For each of these components, it considers the latest starting time from which a node in this component can be reached, which is given by the previously computed latency lists. This time is the beginning of latency pairs for its nodes, that ends at current time, and so the algorithms updates the lists accordingly.

\DontPrintSemicolon
\begin{algorithm}
\Fn{Latency-lists}{
\KwIn{a link stream $L=(T,V,E)$ and $u\in V$}
\KwOut{for each $w \in V$, the latency list from $u$ to $w$}
create $\myLL \gets \mbox{empty dictionary}$ \label{line:latpairsbegin} and $\myLL[w] \gets \mbox{empty list for all }w$\\
\For{$t$ in $\eventtimes$\label{line:latencyfort}}{
 append $(t,t)$ to $\myLL[u]$\\
 \For{each connected component $C$ of $G_t$}{
  $s \gets \myNone$ and $X \gets \emptyset$\\
  \For{$w \in C$ with non-empty $\myLL[w]$ \label{line:propagatecreateX}}{
   $(s',a') \gets \mbox{last element of } \myLL[w]$\\
   \lIf{$s=\myNone$ or $s'>s$}{$s \gets s'$ and $X \gets \{w\}$}
   \lElseIf{$s'=s$}{add $w$ to $X$}
   }
   \If{$X$ is non-empty}{
   \lFor{$w \in C\setminus X$}{append $(s,t)$ to $\myLL[w]$}\label{line:latpairsend}
   }
  }
 }

 \Return{\myLL}
 }
\caption{Computation of all latency lists from a given node.}
\label{algo:latency-lists}
\end{algorithm}

\begin{theorem}
Given a link stream $L=(T,V,E)$ and a node $u\in V$, Algorithm~\ref{algo:latency-lists} computes the ordered latency lists from $u$ to any node $w \in V$.
\end{theorem}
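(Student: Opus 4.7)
The plan is to proceed by induction on the event times, with the invariant that after completing the iteration of the outer loop at line~\ref{line:latencyfort} for time $t$, each $\myLL[w]$ equals the ordered list of all latency pairs from $u$ to $w$ whose arrival time is at most $t$. The base case is vacuous. For the inductive step, assuming the invariant holds just before iteration $t$, I need to establish soundness (every pair appended during iteration $t$ is a genuine latency pair) and completeness (every latency pair from $u$ to any $w$ with arrival equal to $t$ is appended by the algorithm).

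Soundness is immediate for the connected component of $G_t$ containing $u$: after $(t,t)$ is appended to $\myLL[u]$, the computed $s$ equals $t$ and $X = \{u\}$, so $(t,t)$ is appended to every other node of this component, which is correct since these nodes are instantaneously connected to $u$ in $G_t$. For a component $C$ not containing $u$, where the algorithm appends $(s,t)$ to $w \in C \setminus X$ and $s$ is achieved by some $x \in X$ with last pair $(s, a_x)$ satisfying $a_x < t$ by the inductive hypothesis, I would first construct a witness path from $(s, u)$ to $(t, w)$ of duration $t-s$ by concatenating a witness of $(s, a_x)$, a wait at $x$ from $a_x$ to $t$, and an instantaneous $G_t$-path $x \to w$. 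To certify minimality, I would assume a strictly faster path exists and, using the property (implied by componentwise ordering) that any such path produces a latency pair $(s_i, a_i)$ of $w$ satisfying $s_i \ge s$ and $a_i - s_i < t - s$, split into two sub-cases. If $a_i < t$, the inductive hypothesis places $(s_i, a_i)$ already in $\myLL[w]$, forcing either $s_i > s$ (contradicting maximality of $s$) or $s_i = s$ (contradicting $w \notin X$). If $a_i = t$ with $s_i > s$, I would peel off the last link of a witness for $(s_i, t)$, which must lie at time $t$ inside $G_t$; iterating eventually exhibits some $x' \in C$ whose $\myLL[x']$ contains a latency pair with start $s_i > s$, again contradicting maximality.

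For completeness, I would take any latency pair $(s^*, t)$ from $u$ to $w$ with arrival $t$. The instantaneous case $s^* = t$ forces $u$ and $w$ to lie in the same $G_t$-component, already handled above. For $s^* < t$ (which forces $u$ to lie outside the $G_t$-component $C$ of $w$), a fastest witness path has its last link at time $t$ via some edge $x w \in E_t$; the same peel-off argument produces some $x' \in C$ reached from $(s^*, u)$ at time strictly below $t$, so $\myLL[x']$ contains a latency pair with start at least $s^*$. A strictly larger start would yield a path from $u$ to $w$ of duration strictly less than $t - s^*$, contradicting that $(s^*,t)$ is a latency pair, so $s = s^*$; and $w \notin X$ since any pair of $\myLL[w]$ starting at $s^*$ with arrival below $t$ would produce the same contradiction. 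Thus the algorithm correctly appends $(s^*, t)$ to $\myLL[w]$.

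The main obstacle I anticipate is the careful handling of paths whose tail is entirely contained in $G_t$ (the peel-off sub-argument), which appears in both soundness and completeness. This is where the componentwise ordering of latency pairs and the role of $G_t$-connected components interlock, and where the hypothesis that $u$ does not lie in $C$ is essential to prevent a spurious termination of the recursion at $u$ itself.
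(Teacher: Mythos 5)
Your proposal follows essentially the same route as the paper's proof: induction over event times with the invariant that after iteration $t$ each $\myLL[w]$ contains exactly the latency pairs arriving by $t$, the same witness construction (fastest path to a node of $X$, waiting, then an instantaneous $G_t$-path), with your peel-off and maximality-of-$s$ arguments simply filling in the minimality and completeness details that the paper asserts more briefly. One small correction: state the invariant only for latency pairs whose endpoints are event times (as the theorem and the paper do), since instantaneous latency pairs at non-event times are never stored in $\myLL$; your arguments go through under that restriction because any instantaneous connection at a non-event time persists at the adjacent event times.
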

\begin{proof}
We claim that, at the end of each iteration of the main loop, for all $w$ in $V$, $\myLL[w]$ is the list of all latency pairs $(s,a)$ from $u$ to $w$ such that $s$ and $a$ are event times with $a \le t$.

Assume this is true for all iterations before a given event time $t$. When it reaches this event time, the loop starts by adding $(t,t)$ to $\myLL[u]$, which makes the claim true for $w=u$. Consider any connected component $C$ of $G_t$; the nodes $w\in C$, with non-empty $\myLL[w]$ are the nodes reachable from $u$ with an arrival time before $t$ or at $t$.
Then, the value of $s'$ computed by the loop at Line~\ref{line:propagatecreateX} is the latest starting time such that one of these nodes is reachable from $(s',u)$ before $t$ or at $t$, and $X$ is the set of these reachable nodes.

Therefore, if $X$ is non-empty, there exists a path from $(s',u)$ to $(t,w)$ for any $w\in C\backslash X$: for any $x \in X$, the path from $(s',u)$ to $(t,x)$ and then from $(t,x)$ to $(t,w)$ (which exists since $x$ and $w$ are in the same connected component $C$ of $G_t$) is such a path. As a consequence, $(s',t)$ is a latency pair for any $w\in C \backslash X$.
Notice that $(s',t)$ is not a latency pair for any node $x\in X$, $x\ne u$, since they all have a latency pair $(s', t_x)$ with $t_x<t$.

Finally, if the claim is true for all event times lower than $t$, it is true for $t$ too. It is true for the first iteration, {\em i.e.} when $t$ is the first event time:
it sets $\myLL[w]$ to $\{(t,t)\}$ for all node $w$ in the same connected component of $G_t$ as $u$, which is the correct value. Therefore, for all $w$ in $V$, the returned value of $\myLL[w]$ is the list of latency pairs $(s,a)$ from $u$ to $w$ such that $s$ and $a$ are event times, and it is ordered by construction.

\end{proof}

\section{Contribution of a node pair}
\label{sec:contribution}

In all this section, we consider a link stream $L=(T,V,E)$ and two nodes $u$ and $w$ in $V$. In addition, we consider a temporal node $(t,v)$ in $T \times V$.

For any $i$ and $j$ in $T$, we denote by $C^{ij}_{tv}(u,w)$ the fraction $\frac{\sigma((i,u),(j,w),(t,v))}{ \sigma((i,u),(j,w))}$ of \sfp s from $(i,u)$ to $(j,w)$ that involve $(t,v)$, and we call it the {\bf contribution of $(i,j)$}. If there is no path from $(i,u)$ to $(j,w)$, we consider that $C^{ij}_{tv}(u,w)=0$. By extension, we call $\int_{i,j \in T} C^{ij}_{tv}(u,w) \diff i \diff j$ the {\bf contribution of $(u,w)$} to the betweenness of $(t,v)$, and we denote it by $C_{tv}(u,w)$. The goal of this section is to compute $C_{tv}(u,w)$.

First notice that the contribution of $(i,j)$ is derived from volumes of paths as follows. Given $x$, $y$ and $z$ in $T\times V$, we denote by $\mySFP(x,y)$ the set of all \sfp s from $x$ to $y$, and by $\mySFP(x,y,z)$ the set of these paths that involve $z$. Then, we define $\sigma(x,y)$ and $\sigma(x,y,z)$ as the volumes of $\mySFP(x,y)$ and $\mySFP(x,y,z)$, respectively. It follows that $C^{ij}_{tv}(u,w)$ is equal to $\sigma((i,u),(j,w),(t,v)) \divvol \sigma((i,u),(j,w))$ if there is a path from $(i,u)$ to $(j,w)$. Otherwise, $C^{ij}_{tv}(u,w)$ is $0$.

This gives a rigorous ground to the definition of $C^{ij}_{tv}(u,w)$, which, as discussed at the end of Section~\ref{sec:preliminaries}, was loosely defined as the fraction $\frac{\sigma((i,u),(j,w),(t,v))}{ \sigma((i,u),(j,w))}$ of \sfp s from $(i,u)$ to $(j,w)$ that involve $(t,v)$; it is indeed equal to the ratio between the two volumes $\sigma((i,u),(j,w),(t,v))$ and $\sigma((i,u),(j,w))$ now defined, with volume ratio operation from Definition~\ref{def:divvol}: $C^{ij}_{tv}(u,w) = \Tfrac{\sigma((i,u),(j,w),(t,v))}{ \sigma((i,u),(j,w))}$.

Consider for instance the case of Figure~\ref{fig:ex-intro} with $u=a$ and $w=e$, and let us consider $i=0$ and $j=18$. Then, the \sfp s from $(i,u) = (0,a)$ to $(j,w) = (18,e)$ are the elements of the set $\mySFP((0,a),(18,e)) = X \sqcup Y$ where $X$ and $Y$ are the sliding sets $a,\{2\},b,[3,5],c,[6,7],d,\{9\},e$ and $a,\{9\},c,\{11\},b,[12,14],d,\{16\},e$, respectively. If $(t,v)=(7.5,c)$ or $(t,v)=(10,b)$, for instance, then none of these paths involve $(t,v)$ and so we obtain a $0$ contribution. If $(t,v) = (4.5,c)$ or $(t,v)=(8,d)$, for instance, then all paths in $X$ involve $(t,v)$ and no path in $Y$ does, leading to $C^{ij}_{tv}(u,w) = \sigma((0,a),(18,e),(t,v)) \divvol \sigma((0,a),(18,e)) = |X| \divvol (|X|\plusvol |Y|) = (2,2) \divvol ((2,2)\plusvol (2,1)) = (2,2) \divvol (2,2) = 1$. If $(t,v) = (10,c)$ or $(t,v)=(14,d)$, then $C^{ij}_{tv}(u,w) = |Y| \divvol (|X|\plusvol |Y|) = 0$.

\medskip

Before presenting the algorithm computing these path volumes and associated contributions, we characterize more precisely which pairs $(i,j)$ have non-zero contribution.

\begin{lemma}
\label{lem:uniquelatsfp}
There is at most one latency pair from $u$ to $w$ with non-zero contribution.
\end{lemma}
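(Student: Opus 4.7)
The plan is to argue by contradiction. I suppose $(s_1,a_1)$ and $(s_2,a_2)$ are two distinct latency pairs from $u$ to $w$ both having non-zero contribution, and use the componentwise ordering established earlier in this section to assume $s_1<s_2$ and $a_1<a_2$. Non-zero contribution of $(s_i,a_i)$ means that some \sfp\ $P_i$ from $(s_i,u)$ to $(a_i,w)$ involves $(t,v)$, for $i=1,2$. Since $P_i$ realises the latency $a_i-s_i$, its first transition must occur at $t_1=s_i$ and its last at $t_k=a_i$; hence involving $(t,v)$ forces $t\in[s_i,a_i]$. Intersecting the two intervals yields $s_1<s_2\le t\le a_1<a_2$.

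I then plan to stitch $P_1$ and $P_2$ together into a shortcut from $(s_2,u)$ to $(a_2,w)$. Since $P_2$ involves $(t,v)$, it reaches $v$ at some transition time $t^{(2)}_{\mathrm{in}}\le t$; since $P_1$ involves $(t,v)$, it leaves $v$ at some transition time $t^{(1)}_{\mathrm{out}}\ge t$. I concatenate the prefix of $P_2$ (from $u$ at time $s_2$ to $v$ at time $t^{(2)}_{\mathrm{in}}$) with the suffix of $P_1$ (from $v$ at time $t^{(1)}_{\mathrm{out}}$ to $w$ at time $a_1$). This is a valid temporal path because $t^{(2)}_{\mathrm{in}}\le t\le t^{(1)}_{\mathrm{out}}$, and because the edge used by $P_1$ to leave $v$ is active at $t^{(1)}_{\mathrm{out}}$ by construction. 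The resulting path goes from $(s_2,u)$ to $(a_1,w)$, and is therefore also a path from $(s_2,u)$ to $(a_2,w)$ of duration $a_1-s_2<a_2-s_2$. This contradicts $(s_2,a_2)$ being a latency pair, which requires $\ell((s_2,u),(a_2,w))=a_2-s_2$.

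The main obstacle I foresee lies in the boundary cases where $v$ coincides with $u$ or with $w$ on one of the two paths, because then "involving $(t,v)$" collapses to a single transition time rather than a transition interval, and the quantities $t^{(i)}_{\mathrm{in}}$, $t^{(i)}_{\mathrm{out}}$ need to be reinterpreted. These cases should however be harmless: each is either ruled out immediately by the strict chain $s_1<s_2\le t\le a_1<a_2$ (for instance $v=u$ on $P_1$ forces $t=s_1$, contradicting $t\ge s_2$; symmetrically for $v=w$ on $P_2$), or produces the same shortcut contradiction after trivial adjustments. A careful enumeration of these boundary configurations is the only routine but slightly tedious step I expect in turning this sketch into a full proof.
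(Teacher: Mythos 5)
Your proof is correct and follows essentially the same route as the paper: both arguments use the componentwise ordering of latency pairs, observe that non-zero contribution gives a path from $(s_2,u)$ to $(t,v)$ (prefix of $P_2$) and from $(t,v)$ to $(a_1,w)$ (suffix of $P_1$), and concatenate them into a path of duration at most $a_1-s_2$, contradicting the latency-pair property. The paper simply states this at the level of reachability (and notes the contradiction for both pairs, while one suffices); your explicit prefix/suffix stitching and boundary-case discussion are just a more detailed rendering of the same idea.
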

\begin{proof}
Consider two distinct latency pairs $(s,a)$ and $(s',a')$; we can assume $s<s'$ and $a<a'$, since, as explained in previous section, $[s',a']\subseteq [s,a]$ is impossible. Suppose both latency pairs have non-zero contribution: there are \sfp s from $(s,u)$ to $(a,w)$ that involve $(t,v)$ and from $(s',u)$ to $(a',w)$ that also involve $(t,v)$.
Therefore, there is a path from $(s',u)$ to $(t,v)$ and a path from $(t,v)$ to $(a,w)$, and so a path from $(s',u)$ to $(a,w)$. It has duration $a-s'$ which is strictly lower than both $a-s$ and $a'-s'$, thus contradicting both that $(s,a)$ and $(s',a')$ are latency pairs.
\end{proof}

If all latency pairs from $u$ to $w$ have contribution $0$, then the contribution of $(u,w)$ itself is $0$. Otherwise, let us denote by $(s,a)$ the unique latency pair with non-zero contribution.

We now introduce {\bf two specific times, $S$ and $A$}, that we will use to find all time instants with non-zero contribution.
We define $\intoo{S,A}$ as the largest interval containing $\intoo{s,a}$ such that: for all other latency pair $(s',a')$ in this interval, either $a'-s' > a-s$, or $a'-s' = a-s$ and $d((s',u),(a',w)) \ge d((s,u),(a,w))$; and the number of instantaneous paths from $(S,u)$ to $(A,w)$ of length $d((s,u),(a,w))$ is finite.
We illustrate this definition in Figure~\ref{fig:contrib}.

\begin{figure}[!h]
\centering
\includegraphics[width=.95\linewidth]{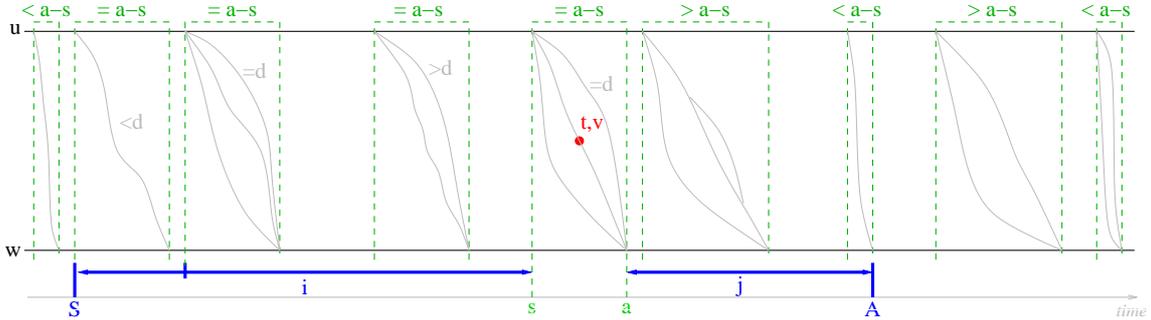}
\caption{
An abstract example of link stream $L = (T,V,E)$ in which we consider a specific $(t,v)$ in $T\times V$ (in red), two nodes $u$ and $w$ in $V$ (in black, horizontal lines), as well as the latency pair $(s,a)$ containing $t$ such that shortest (necessarily fastest) paths from $(s,u)$ to $(a,w)$ have length $d$ and some of them involve $(t,v)$. We display all latency pairs from $u$ to $w$ with two green vertical lines topped by a dotted horizontal line indicating the corresponding latencies ($=a-s$, $<a-s$ or $>a-s$). In addition, we also indicate the length ($=d$, $<d$ or $>d)$ of corresponding shortest paths within each latency pair, when this is useful (in grey). We indicate in blue the two specific times $S$ and $A$ defined above, as well as the time periods for $i$ and $j$ such that the contribution of $(i,j)$ may be non-zero (Lemma~\ref{lem:minsmina}).
}
\label{fig:contrib}
\end{figure}

We then have the following result.

\begin{lemma}
\label{lem:minsmina}
All pairs $(i,j)$ in $T \times T$ that have non-zero contribution are in $\intcc{S,s}\times \intcc{a,A}$.
\end{lemma}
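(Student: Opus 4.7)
The plan is to argue the contrapositive: any pair $(i, j) \notin \intcc{S, s} \times \intcc{a, A}$ has contribution zero. I split the argument in two parts, proving separately that $i \le s$ and $j \ge a$, and then that $i \ge S$ and $j \le A$.

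First I would establish $i \le s$ and $j \ge a$ by a reduction-to-latency-pairs argument. Assume $C^{ij}_{tv}(u, w) \ne 0$, and pick a \sfp\ $P$ from $(i, u)$ to $(j, w)$ through $(t, v)$, with endpoints at times $t_1$ and $t_k$. Any strictly faster path, or any equally fast and strictly shorter path, from $(t_1, u)$ to $(t_k, w)$ would also be such a path from $(i, u)$ to $(j, w)$, contradicting that $P$ is a \sfp\ there. Hence $P$ is itself a \sfp\ from $(t_1, u)$ to $(t_k, w)$ through $(t, v)$, so $(t_1, t_k)$ is a latency pair from $u$ to $w$ with non-zero contribution. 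By Lemma~\ref{lem:uniquelatsfp}, $(t_1, t_k) = (s, a)$, forcing $i \le t_1 = s$ and $j \ge t_k = a$.

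Second, I would establish $i \ge S$ (the case $j \le A$ is symmetric). Suppose for contradiction that $i < S$. Since $\intoo{S, A}$ is by hypothesis the largest open interval containing $\intoo{s, a}$ satisfying the two defining conditions, the strictly larger interval $\intoo{i, A}$ must violate at least one of them. In the first case, there is a latency pair $(s', a') \ne (s, a)$ with $[s', a'] \subseteq \intoo{i, A}$ but $[s', a'] \not\subseteq \intoo{S, A}$ violating the latency--distance condition. The non-inclusion forces $s' \le S$; combined with $(s', a') \ne (s, a)$ and the componentwise ordering of latency pairs recalled at the end of Section~\ref{sec:latencies}, this yields $s' < s$ and $a' < a \le j$, while $s' > i$. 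Hence any fastest path from $(s', u)$ to $(a', w)$ is also a path from $(i, u)$ to $(j, w)$. If $a' - s' < a - s$, this gives $\ell((i, u), (j, w)) \le a' - s' < a - s$, contradicting Part~1 (every \sfp\ through $(t, v)$ has duration $a - s$). If instead $a' - s' = a - s$ and $d((s', u), (a', w)) < d$, the \sfp s from $(s', u)$ to $(a', w)$ are fastest paths from $(i, u)$ to $(j, w)$ strictly shorter than the \sfp s through $(t, v)$, preventing the latter from being shortest fastest, again a contradiction.

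In the remaining case, the instantaneous-path condition fails, so some subinterval $I \subseteq [i, S)$ of instants $\tau$ produces $u$--$w$ paths of length $d$ in $G_\tau$. Since $\tau < S \le s \le j$, each such path is a duration-zero path from $(i, u)$ to $(j, w)$. If $a > s$, this forces $\ell((i, u), (j, w)) = 0 < a - s$, contradicting Part~1 as above. If $a = s$, then $s = a = t$, the \sfp s through $(t, v)$ occur only at the single time $t$, while the interval $I$ contributes a strictly higher-dimensional component to $\sigma((i, u), (j, w))$ than to $\sigma((i, u), (j, w), (t, v))$, so Definition~\ref{def:divvol} yields $C^{ij}_{tv}(u, w) = 0$. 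The main obstacle is this last subcase: rigorously translating ``infinitely many instantaneous paths'' into the dimension inequality that makes the volume quotient vanish requires careful bookkeeping of which sliding sets contribute at each dimension.
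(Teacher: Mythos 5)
Your proof is correct and takes essentially the same route as the paper: your Part~1 is the paper's reduction to the unique contributing latency pair via Lemma~\ref{lem:uniquelatsfp}, and your Part~2 invokes the maximality defining $S$ and $A$ to reach the same two failure cases (a latency pair with smaller latency or equal latency and smaller distance, or infinitely many instantaneous paths of length $d((s,u),(a,w))$), which you close by contradiction with the optimality of the chosen \sfp\ and a dimension argument where the paper instead shows directly that \sfp s from the better latency pair cannot involve $(t,v)$ and uses a finite-versus-infinite count. The bookkeeping obstacle you flag in the $s=a$ subcase is real but no worse than the paper's own informal treatment of that case.
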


\begin{proof}
If a given pair $(i,j)$ has non-zero contribution, then there is a latency pair $(s',a')$ with $s'\ge i$ and $a'\le j$ that has non-zero contribution. Remind that $(s,a)$ is itself such a latency pair. From Lemma~\ref{lem:uniquelatsfp}, we then have $(s',a') = (s,a)$, and so $i \le s$ and $j \ge a$.

If $i<S$ and $j \ge a$, or if $i \le s$ and $j > A$, then by definition of $S$ and $A$ we are in one of the following situations.

There exists a latency pair $(s',a')$ in $\intcc{i,j}$ such that: either $a'-s' < a-s$, or $a'-s' = a-s$ and $d((s',u),(a',w)) < d((s,u), (a,w))$. Then, \sfp s from $(s,u)$ to $(a,w)$ are not \sfp s from $(i,u)$ to $(j,w)$. All \sfp s from $(i,u)$ to $(j,w)$ are from $(s',u)$ to $(a',w)$ where $(s',a')$ is a latency pair as described above. Suppose such a \sfp\ involves $(t,v)$. Then there are paths from $(s',u)$ to $(t,v)$ and from $(t,v)$ to $(a',w)$. As a consequence, $s' \not\in \intcc{s,t}$, otherwise $(s,a)$ would not be a latency pair. Likewise, $a' \not\in \intcc{t,a}$. Therefore, $s' < s$ and $a'>a$, but this contradicts the fact that $a'-s' \le s-a$. This means that \sfp s from $(s',u)$ to $(a',w)$ cannot involve $(t,v)$, and so the contribution of $(i,j)$ is $0$.

Or there is an infinite number of instantaneous paths from $(i,u)$ to $(j,w)$ with length $d((s,u),(a,w))$. Only the $\sigma_t(u,w)$ ones starting and arriving at time $t$ involve $(t,v)$. There is a finite number of such paths, as they are paths in the graph $G_t$. Therefore, the contribution of $(i,j)$ is zero.

In conclusion, $i \le s$, $j \ge a$, $i$ cannot be smaller than $S$, and $j$ cannot be larger than $A$, which proves the claim.
\end{proof}

This lemma says that all pairs $(i,j)$ with non-zero contribution are in $\intcc{S,s}\times \intcc{a,A}$. Notice however that some pairs $(i,j)$ in $\intcc{S,s} \times \intcc{a,A}$ may have a contribution equal to 0. This happens whenever the volume of \sfp s from $(s,u)$ to $(a,w)$ has a lower dimension than the one from $(i,u)$ to $(j,w)$.

We now define specific latency pairs that play a special role, as any shortest fastest path from $(i,u)$ to $(j,w)$ must start and arrive within one of these pairs. To do this, we introduce an {\bf ordered list $\latpairset$ of latency pairs} centered on $(s,a)$, which means that latency pairs preceding $(s,a)$ have negative indexes in the list and the others have positive indexes. It is the list $\latpairset = (s_{-l},a_{-l}), (s_{-l+1}, a_{-l+1})), \dots, (s_0=s, a_0=a) ,\dots, (s_r,a_r)$ such that, for all $k$, $\intcc{s_k,a_k} \subseteq \intcc{S,A}$, $a_k-s_k = a-s$, and $d((s_k,u),(a_k,w)) = d((s,u),(a,w))$. We also define $s_{-l-1}=S$ and $a_{r+1} = A$.
Notice that $s_{-l-1} = s_l$ or $a_r = a_{r+1}$ are not forbidden; this happens for instance when $s_{-l} = \alpha$ or $a_r=\omega$. We show now that the latency pairs in $\latpairset$ give precisely the shortest fastest paths from $u$ to $w$.

\begin{lemma}
\label{lem:partition}
For any pair $(i,j)$ in $\intoc{S,s} \times \intco{a,A}$, the set $\mySFP((i,u),(j,w))$ is the disjoint union of all sets $\mySFP((s_k,u),(a_k,w))$ such that $(s_k,a_k)$ in $\latpairset$ and $\intcc{s_k,a_k} \subseteq [i,j]$.
\end{lemma}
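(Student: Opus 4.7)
The plan is to prove three things: (a) every $\sfp$ from $(i,u)$ to $(j,w)$ lies in $\mySFP((s_k,u),(a_k,w))$ for some $(s_k,a_k) \in \latpairset$ with $\intcc{s_k,a_k}\subseteq\intcc{i,j}$; (b) conversely, every such $\mySFP((s_k,u),(a_k,w))$ is contained in $\mySFP((i,u),(j,w))$; and (c) the sets on the right are pairwise disjoint. The main preliminary step, on which everything else rests, is to pin down the latency and distance from $(i,u)$ to $(j,w)$.

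\textbf{Step 1 (latency and distance from $(i,u)$ to $(j,w)$).} Since $\intcc{s,a}\subseteq\intcc{i,j}$, concatenating with idle time shows $\ell((i,u),(j,w))\le a-s$ and the minimum length among fastest paths is $\le d((s,u),(a,w))$. For the reverse inequality on the latency, I would argue by contradiction: if some path had strictly smaller duration, the argument sketched after Lemma~\ref{lem:uniquelatsfp} (a fastest sub-path realises a latency pair) would produce a latency pair $(s''',a''')$ with $\intcc{s''',a'''}\subseteq\intoo{S,A}$ and $a'''-s'''<a-s$, contradicting the defining property of $\intoo{S,A}$. Hence $\ell((i,u),(j,w))=a-s$, and similarly the minimum length among fastest paths equals $d((s,u),(a,w))$, again using the second clause of the definition of $\intoo{S,A}$.

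\textbf{Step 2 (every $\sfp$ sits inside some $\mySFP((s_k,u),(a_k,w))$).} Let $P$ be a $\sfp$ from $(i,u)$ to $(j,w)$, starting at $s'$ and arriving at $a'$. From Step~1, $a'-s'=a-s$ and the length of $P$ equals $d((s,u),(a,w))$. By the same sub-path-realises-a-latency-pair argument, $(s',a')$ is a latency pair from $u$ to $w$. Because $\intcc{s',a'}\subseteq\intcc{i,j}\subseteq\intcc{S,A}$ and because the defining property of $\intoo{S,A}$ forces $d((s',u),(a',w))\ge d((s,u),(a,w))$, combined with the reverse inequality coming from $P$ being an $\sfp$ from $(i,u)$ to $(j,w)$, we get $d((s',u),(a',w))=d((s,u),(a,w))$. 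Thus $(s',a')\in\latpairset$, and $P\in \mySFP((s',u),(a',w))$.

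\textbf{Step 3 (reverse inclusion and disjointness).} For $(s_k,a_k)\in\latpairset$ with $\intcc{s_k,a_k}\subseteq\intcc{i,j}$, any $Q\in\mySFP((s_k,u),(a_k,w))$ has duration $a_k-s_k=a-s=\ell((i,u),(j,w))$ and length $d((s,u),(a,w))$, hence is an $\sfp$ from $(i,u)$ to $(j,w)$. For disjointness, note that any path $Q\in\mySFP((s_k,u),(a_k,w))$ must start exactly at $s_k$ and arrive exactly at $a_k$: its duration is the latency $a_k-s_k$, so its actual start time $s''\ge s_k$ and actual arrival $a''\le a_k$ satisfy $a''-s''=a_k-s_k$, forcing equality. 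Since latency pairs are componentwise ordered, distinct pairs in $\latpairset$ yield distinct start (or arrival) times, hence disjoint sets.

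The main obstacle is Step~1: one has to carefully use both clauses of the definition of $\intoo{S,A}$ — the latency clause to fix $\ell((i,u),(j,w))=a-s$, and the distance clause to fix the minimum length of fastest paths — and to check that the half-open endpoints ($i>S$, $j<A$) really exclude the pathological latency pairs that might sit at $S$ or $A$. Once Step~1 is established, Steps~2 and~3 are essentially bookkeeping on latency pairs.
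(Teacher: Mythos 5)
Your proposal is correct and follows essentially the same route as the paper's proof: both directions of inclusion plus disjointness via the fact that paths in $\mySFP((s_k,u),(a_k,w))$ must start at $s_k$ and arrive at $a_k$, with the key equalities (duration $=a-s$, length $=d((s,u),(a,w))$, and membership of the start/arrival pair in $\latpairset$) obtained from the defining property of $\intoo{S,A}$ and the strict bounds $i>S$, $j<A$. Your only organizational difference is factoring the determination of $\ell((i,u),(j,w))$ and of the minimum length of fastest paths into a preliminary step, which the paper instead argues inline via its ``if it were larger/smaller'' comparisons.
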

\begin{proof}
We first show that for any $k$ such that $\intcc{s_k,a_k} \subseteq [i,j]$, $\mySFP((s_k,u),(a_k,w)) \subseteq \mySFP((i,u),(j,w))$. Let us consider a path in $\mySFP((s_k,u),(a_k,w))$. Since $\intcc{s_k,a_k} \subseteq [i,j]$, it is a path from $(i,u)$ to $(j,w)$. It has duration $s_k-a_k = s-a$ because $(s_k,a_k)$ is in $\latpairset$. Moreover, since $i>S$ and $j<A$, there exists no latency pair $(s',a')$ such that $[s',a'] \subseteq [i,j]$ and $a'-s'< a-s$. Therefore, it is a fastest path from $(i,u)$ to $(j,w)$. Similarly, because $(s_k,a_k)$ is in $\latpairset$, this path has length $d((s,u),(a,w))$ and therefore it is a \sfp\ from $(i,u)$ to $(j,w)$.

Now consider any \sfp\ from $(i,u)$ to $(j,w)$, and let us denote by $s'$ and $a'$ its starting and arrival times. Since it is a fastest path, $(s',a')$ is a latency pair, and obviously $[s',a'] \subseteq [i,j]$. In addition, $s'-a' = s-a$: if it was larger then the paths from $(s',u)$ to $(a',w)$ would not be fastest paths from $(i,u)$ to $(j,w)$; and if it was smaller, then the paths from $(s,u)$ to $(a,w)$ would not be fastest paths from $(i,u)$ to $(j,w)$. Similarly, $d((s',u),(a',w)) = d((s,u),(a,w))$: if it was larger then the paths from $(s',u)$ to $(a',w)$ would not be shortest paths from $(i,u)$ to $(j,w)$; and if it was smaller, then the paths from $(s,u)$ to $(a,w)$ would not be shortest paths from $(i,u)$ to $(j,w)$. Therefore, $(s',a')$ is in $\latpairset$, leading to the fact that $\mySFP((i,u),(j,w))$ is included in the union of all sets $\mySFP((s_k,u),(a_k,w))$ such that $(s_k,a_k)$ in $\latpairset$ and $\intcc{s_k,a_k} \subseteq [i,j]$.

Finally, notice that the sets $\mySFP((s_k,u),(a_k,w))$ are disjoint for different values of $k$, since all the paths they contain start at $s_k$ and arrive at $a_k$. We therefore obtain the claim.
\end{proof}

\begin{algorithm}
\Fn{PrevList}{
\KwIn{a link stream $L = (T,V,E)$, $u \in V$, $w \in V$, $(s,a)$  a latency pair from $u$ to $w$, and the ordered latency list $\myLL$ from $u$ to $w$}
\KwOut{the  list $((s_{-1}, f_{-1}), (s_{-2}, f_{-2}), \ldots, (s_{-l-1} = S, f_{-l-1})$ with $s_k$ defined by \latpairset{} and with $f_k = \sigma((s_{k+1},u), (a,w)) \minusvol \sigma((s,u),(a,w))$}
  init $\myR$ to empty list and $\myvol$ to $(0,0)$\\
  \If{$s = a$ and $d^-_{s}(u, w) = d((s,u),(a,w))$ \label{alg:prevlist:special1}}{
    \Return{$\myR$} \label{alg:prevlist:special2}
    }
  \ForEach{$(s',a')$ with $s'<s$ in $\myLL$ backwards \label{alg:prevlist:for}}{
    \If{$a'-s'<a-s$}{
      append $(s',\myvol)$ to $\myR$ \label{alg:prevlist:append:lat} and \Return{$\myR$} \label{alg:prevlist:returnlat}
    }
    \If{$a' - s'= a-s$}{
      \If{$d((s',u),(a',w)) < d((s,u),(a,w))$}{
        append $(s', vol)$ to $\myR$ \label{alg:prevlist:append:dist} and \Return{$\myR$} \label{alg:prevlist:return:dist}\\
      }
      \If{$d((s',u),(a',w)) = d((s,u),(a,w))$}{
          append $(s',\myvol)$ to $\myR$ \label{alg:prevlist:append:inf}\\
          \If{$s' = a'$ and $d^-_{s'}(u, w) = d((s,u),(a,w))$              \label{alg:prevlist:lineeps:2}}{
              \Return{$\myR$} \label{alg:prevlist:returninf}
          }
          $\myvol \gets \myvol \plusvol \mbox{VSP}(L, (s',u), (a',w))$ \label{alg:prevlist:updatevol}\\
        }
    }
  }
  append $(\alpha,\myvol)$ to $\myR$ \label{alg:prevlist:append:alpha} and \Return{$\myR$} \label{alg:prevlist:returnalpha}

}
\caption{Compute the list of starting times for latency pairs in \latpairset, as well as associated volumes of sets of shortest fastest paths.}
\label{alg:prevlist}
\end{algorithm}

Thanks to these results, we obtain an expression giving the contribution of a node pair as a discrete sum.

\begin{lemma}
  \label{lem:contrib:sum}
The contribution of $(u,w)$ to the betweenness of $(t,v)$, {\em i.e.} the fraction of \sfp s from $u$ to $w$ that involve $(t,v)$, namely
$C_{tv}(u,w) =
\int_{i,j \in T} \Tfrac{\sigma((i,u),(j,w),(t,v))}{ \sigma((i,u),(j,w))} \diff i \diff j
$,
can be written as a discrete sum:
\begin{equation}
  C_{tv}(u,w) =
  \sum_{k=-l-1}^{-1} \sum_{k'=0}^r (s_{k+1} - s_k ) (a_{k'+1} - a_{k'} )
  \Dfrac{\sigma((s,u),(a,w),(t,v))}{ \plusvol_{h=k+1}^{k'} \sigma((s_h,u),(a_h,w))}.
\label{eq:sumcontrib}
\end{equation}
\end{lemma}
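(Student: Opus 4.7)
The plan is to reduce the double integral defining $C_{tv}(u,w)$ to a Riemann sum by partitioning the integration domain into sub-rectangles on which the integrand is constant. First I would apply Lemma~\ref{lem:minsmina} to restrict the domain to $\intcc{S,s}\times\intcc{a,A}$, and then, since the boundary has two-dimensional Lebesgue measure zero, to the half-open rectangle $\intoc{S,s}\times\intco{a,A}$.

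Next I would partition this rectangle as the disjoint union of the sub-rectangles $R_{k,k'} = \intoc{s_k,s_{k+1}}\times\intco{a_{k'},a_{k'+1}}$ for $k\in\{-l-1,\dots,-1\}$ and $k'\in\{0,\dots,r\}$, using the conventions $s_{-l-1}=S$ and $a_{r+1}=A$ from the definition of \latpairset. On each $R_{k,k'}$, a pair $(s_h,a_h)\in\latpairset$ satisfies $\intcc{s_h,a_h}\subseteq\intcc{i,j}$ exactly when $s_h\ge i>s_k$ and $a_h\le j<a_{k'+1}$, which, by the componentwise ordering of latency pairs, is equivalent to $k+1\le h\le k'$. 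Lemma~\ref{lem:partition} then identifies $\mySFP((i,u),(j,w))$ with $\sqcup_{h=k+1}^{k'}\mySFP((s_h,u),(a_h,w))$, whose volume $\plusvol_{h=k+1}^{k'}\sigma((s_h,u),(a_h,w))$ is independent of $(i,j)$ within $R_{k,k'}$.

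For the numerator, each $(s_h,a_h)\in\latpairset$ is a latency pair from $u$ to $w$, so by Lemma~\ref{lem:uniquelatsfp} only $(s_0,a_0)=(s,a)$ can have non-zero contribution. Since $k\le-1$ and $k'\ge 0$, the index $h=0$ always lies in $\{k+1,\dots,k'\}$, so the only paths in the above decomposition that involve $(t,v)$ are those in $\mySFP((s,u),(a,w))$, giving $\sigma((i,u),(j,w),(t,v))=\sigma((s,u),(a,w),(t,v))$, again constant on $R_{k,k'}$. Hence the integrand is constant on each $R_{k,k'}$, equal to $\sigma((s,u),(a,w),(t,v))\divvol\plusvol_{h=k+1}^{k'}\sigma((s_h,u),(a_h,w))$, and its integral over $R_{k,k'}$ equals this value times the area $(s_{k+1}-s_k)(a_{k'+1}-a_{k'})$. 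Summing over $k$ and $k'$ yields Equation~(\ref{eq:sumcontrib}); degenerate sub-rectangles (such as when $S=s_{-l}$ or $a_r=A$) contribute zero automatically through the vanishing area factor, so one need not worry that the quotient of volumes be well-defined there.

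The main obstacle I expect is the bookkeeping that makes the integrand genuinely constant on each tile: one must invoke Lemma~\ref{lem:uniquelatsfp}, together with the fact that \latpairset\ consists of latency pairs from $u$ to $w$, to kill all $h\ne 0$ contributions to the numerator, and simultaneously invoke Lemma~\ref{lem:partition} to decompose the denominator cleanly. Once this is done, the identification of the integrand as a step function whose pieces are indexed by $(k,k')$ is immediate, and the formula follows by linearity of the integral.
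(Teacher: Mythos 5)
Your proposal is correct and follows essentially the same route as the paper's proof: restrict the domain via Lemma~\ref{lem:minsmina}, tile $\intoc{S,s}\times\intco{a,A}$ by the rectangles $\intoc{s_k,s_{k+1}}\times\intco{a_{k'},a_{k'+1}}$, use Lemma~\ref{lem:partition} to make the denominator constant on each tile, observe the numerator is constantly $\sigma((s,u),(a,w),(t,v))$, and integrate tile by tile. Your explicit appeal to Lemma~\ref{lem:uniquelatsfp} and the componentwise ordering to justify the constancy of the numerator (and the remark on degenerate tiles) only spells out details the paper states more tersely.
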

\begin{proof}
According to Lemma~\ref{lem:minsmina}, the contribution of time instants $(i,j)$ is equal to zero whenever $(i,j)\not\in \intcc{S,s} \times \intcc{A,a}$.
For $(i,j) \in \intcc{S,s} \times \intcc{A,a}$,
all shortest fastest paths from $(i,u)$ to $(j,w)$ involving $(t,v)$ start at time $s$ and arrive at time $a$  and
the contribution of $(i,j)$ is therefore equal to
$\sigma((s,u),(a,w),(t,v)) \divvol{} \sigma((i,u),(j,w))$.
Therefore, $C_{tv}(u,w)
= \int_{\intcc{S,s}\times \intcc{a,A}} \Tfrac{\sigma((s,u),(a,w),(t,v))}{ \sigma((i,u),(j,w))} \diff i \diff j
= \int_{\intoc{S,s}\times \intco{a,A}} \Tfrac{\sigma((s,u),(a,w),(t,v))}{ \sigma((i,u),(j,w))} \diff i \diff j$.

According to Lemma~\ref{lem:partition},
for any $k<0$,
any $k' \ge 0$, any $i\in \intoc{s_k, s_{k+1}}$, and any  $j\in \intco{a_{k'}, a_{k'+1}}$,
the value of $\sigma((i,u),(j,w))$ is constant and it is equal to
$\plusvol_{h=k+1}^{k'} \sigma((s_h,u),(a_h,w))$.
  
Therefore,
\begin{equation*}
\begin{split}
\int_{s_k}^{s_{k+1}} \int_{a_{k'}}^{a_{k'+1}} \Dfrac{\sigma((i,u),(j,w),(t,v))}{ \sigma((i,u),(j,w))} \diff i \diff j &
=
\int_{s_k}^{s_{k+1}} \int_{a_{k'}}^{a_{k'+1}} \Dfrac{\sigma((s,u),(a,w),(t,v))}{ \sigma((i,u),(j,w))} \diff i \diff j\\
&
=
\int_{s_k}^{s_{k+1}} \int_{a_{k'}}^{a_{k'+1}} \Dfrac{\sigma((s,u),(a,w),(t,v))}{ \plusvol_{h=k+1}^{k'} \sigma((s_h,u),(a_h,w)) } \diff i \diff j\\
&
=
(s_{k+1} - s_k ) (a_{k'+1} - a_{k'} ) \Dfrac{\sigma((s,u),(a,w),(t,v))}{\plusvol_{h=k+1}^{k'} \sigma((s_h,u),(a_h,w)) }
\end{split}
\end{equation*}
and we obtain the claim.
\end{proof}

In order to compute the sum of Lemma~\ref{lem:contrib:sum}, we need to iterate over all $s_k$, $-l-1\le k \le -1$ and all $a_{k'}$, $0\le k'\le r$.
For this purpose, we first give an algorithm computing the values of $s_k$.
The algorithm also associates to each $s_k$ a volume of \sfp{}s that will be useful for computing the denominator of the fraction in the sum.

\begin{lemma}
\label{lem:prevlist}
Algorithm~\ref{alg:prevlist} computes the  list $(s_{-1}, f_{-1}), (s_{-2}, f_{-2}), \ldots, (s_{-l-1} = S, f_{-l-1})$
with $s_k$ defined by \latpairset{} and
with $f_k = \sigma((s_{k+1},u), (a_{-1},w))$.
\end{lemma}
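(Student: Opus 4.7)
The plan is to prove the lemma by induction on iterations of the main for-loop at Line~\ref{alg:prevlist:for}, maintaining the invariant that after $n$ successful continuations (iterations that do not return), $\myR$ equals $(s_{-1}, f_{-1}), \ldots, (s_{-n}, f_{-n})$ and $\myvol$ equals $\bigplusvol_{h=-n}^{-1} \sigma((s_h, u),(a_h, w))$. To see that $\myvol$ is the correct $f_k$ for the next append (with $k = -(n+1)$), I would apply Lemma~\ref{lem:partition} with $i = s_{k+1}$ and $j = a$, both lying in the required half-open intervals of $\intoc{S, s} \times \intco{a, A}$, which yields $\sigma((s_{k+1}, u),(a, w)) = \bigplusvol_{h=k+1}^{0} \sigma((s_h, u),(a_h, w))$. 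Subtracting $\sigma((s, u), (a, w)) = \sigma((s_0, u),(a_0, w))$ then gives $f_k = \bigplusvol_{h=k+1}^{-1} \sigma((s_h, u),(a_h, w))$, matching the invariant exactly.

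Before the induction, I would dispose of the special case at Lines~\ref{alg:prevlist:special1}--\ref{alg:prevlist:special2}. If $s = a$ and $d^-_s(u, w) = d((s, u),(a, w))$, then in every graph $G_t$ for $t$ just below $s$ there is already a $u$-to-$w$ path of the target length, producing an infinite family of instantaneous paths near $s$ of length $d((s,u),(a,w))$. The finiteness clause in the definition of $S$ then forbids any $t < s$ from belonging to $\intoo{S, A}$, so $S = s$, $\latpairset = \{(s, a)\}$, and the empty list is the correct output.

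For the inductive step, I would match each branch of the loop against one of the three ways the left boundary of $\intoo{S, A}$ can be reached. The branch $a' - s' < a - s$ (Line~\ref{alg:prevlist:returnlat}) and the branch $a' - s' = a - s$ with $d((s', u),(a', w)) < d((s, u),(a, w))$ (Line~\ref{alg:prevlist:return:dist}) both identify $(s', a')$ as the latest latency pair violating the first or second defining condition of $\intoo{S, A}$, so $S = s'$ and appending $(s', \myvol)$ records $(s_{-l-1}, f_{-l-1})$ correctly by the invariant. In the third branch (equal duration and equal distance), $(s', a')$ is a genuine element $(s_k, a_k)$ of $\latpairset$: we append $(s', \myvol)$ (correct by the invariant) and then either detect the instantaneous-finiteness violation at Line~\ref{alg:prevlist:lineeps:2} and return, with the same argument as the special case showing $S = s'$, or update $\myvol$ by $\mbox{VSP}(L,(s',u),(a',w)) = \sigma((s',u),(a',w))$, so that the invariant is preserved for the next iteration by the correctness of Algorithm~\ref{alg:vsp}. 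Finally, if the loop exhausts $\myLL$ without returning, no event time before $s$ violates any condition of $\intoo{S, A}$, so $S = \alpha$, and appending $(\alpha, \myvol)$ at Line~\ref{alg:prevlist:returnalpha} completes the list.

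The main obstacle is justifying that the algorithm stops neither too early nor too late: since $\myLL$ contains only latency pairs with event-time coordinates, I must argue that each of the duration, distance, and instantaneous-finiteness conditions defining the boundary of $\intoo{S, A}$ can only be first violated at an event time, so that the backward scan of $\myLL$ is sufficient to locate $S$. This relies on the observation from Section~\ref{sec:latencies} that non-instantaneous latency pairs have event-time coordinates, together with the fact that reachability and distance in $G_t$ can only change at event times.
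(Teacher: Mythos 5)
Your overall strategy matches the paper's: induction over the backward scan of $\myLL$, a case analysis matching each return branch against the three ways the left boundary $S$ can be reached (the instantaneous check at Lines~\ref{alg:prevlist:special1}--\ref{alg:prevlist:special2} and~\ref{alg:prevlist:lineeps:2}, the shorter-duration / shorter-distance returns, and exhaustion of $\myLL$ giving $S=\alpha$), and accumulation of volumes via VSP; your closing remark that event-time scanning suffices because non-instantaneous latency pairs have event-time bounds is also how the paper implicitly justifies the scan. The gap is in the volume bookkeeping, i.e.\ in identifying the accumulated $\myvol$ with $f_k$. The lemma defines $f_k = \sigma((s_{k+1},u),(a_{-1},w))$, but you instead target the formula from the algorithm's output line, $\sigma((s_{k+1},u),(a,w)) \minusvol \sigma((s,u),(a,w))$, obtained by applying Lemma~\ref{lem:partition} at $(i,j)=(s_{k+1},a)$ and then ``subtracting'' the $h=0$ block. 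Two problems: first, under the arithmetic of Definitions~\ref{def:plusvol} and~\ref{def:divvol} lower-dimensional volumes are absorbed, so $\minusvol$-subtracting $\sigma((s,u),(a,w))$ from $\bigplusvol_{h=k+1}^{0}\sigma((s_h,u),(a_h,w))$ does not in general recover $\bigplusvol_{h=k+1}^{-1}\sigma((s_h,u),(a_h,w))$ (e.g.\ if $\mydim$ of the $(s,a)$ block strictly exceeds that of every block with $h<0$, the subtraction yields a zero-size volume in the wrong dimension). Second, even where that step is harmless, you have proved an identity for the output-line expression and never connected your (correct) invariant $\myvol = \bigplusvol_{h=k+1}^{-1}\sigma((s_h,u),(a_h,w))$ to the quantity the lemma actually asserts, namely $\sigma((s_{k+1},u),(a_{-1},w))$. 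What is needed is the decomposition of $\mySFP((s_{k+1},u),(a_{-1},w))$ into the disjoint union of the $\mySFP((s_h,u),(a_h,w))$ for $k+1\le h\le -1$; note that Lemma~\ref{lem:partition} as stated does not give this, since it requires $j\in\intco{a,A}$ while $a_{-1}<a$, so a one-sided variant of its argument must be redone for this rectangle.

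For comparison, the paper proceeds directly with the lemma's $f_k$: it checks the base case $f_{-1} = \sigma((s_0,u),(a_{-1},w)) = (0,0)$ (no path from $(s,u)$ can reach $w$ by time $a_{-1}<a$ without contradicting that $(s,a)$ is a latency pair), and then verifies the recursion $f_{k} = \sigma((s_{k+1},u),(a_{k+1},w)) \plusvol f_{k+1}$, which is exactly the update at Line~\ref{alg:prevlist:updatevol}. That recursion rests on the same one-sided decomposition, which the paper asserts without detail; so if you repair your argument, the cleanest fix is to prove that decomposition explicitly (disjointness by starting/arrival times, plus the definition of $S$ and of $\latpairset$ ensuring no faster or shorter competitor inside $[s_{k+1},a_{-1}]$) and then either follow the paper's recursion or keep your invariant and conclude directly.
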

\begin{proof}
  The algorithm builds and returns the (initially empty) list $\myR$. The algorithm terminates when $S$ is found.
   Indeed, a return is triggered in three different cases.
   If the empty list is returned at Line~\ref{alg:prevlist:special2},
   this means that
   there exists an $\epsilon > 0$ such that for all $t \in [s-\epsilon, s]$, there is an instantaneous path
   of length $d((s,u),(a,w))$ from $(t,u)$ to $(t,w)$,
   which implies that $S = s$.
   If the return happens after the last value is added to $\myR$  during the {\em for} loop,
   then either $s'<s$, and $s'$ is the largest value such that: $a'-s'<a-s$ (Line~\ref{alg:prevlist:returnlat}); or $a'-s'=a-s$ and $d((s',u),(a',w)) < d((s,u),(a,w))$ (Line~\ref{alg:prevlist:return:dist});
   or there exists $\epsilon >0$ such that for any $t\in [s'-\epsilon,s']$ there is an instantaneous path of length $d((s,u),(a,w))$ from $(t,u)$ to $(t,w)$ (Line~\ref{alg:prevlist:returninf}).
   This corresponds exactly to the definition of $S$.
   Finally if the function returns at Line~\ref{alg:prevlist:append:alpha}, then this means that $S=\alpha$ because none of the above conditions is true for any $s'>\alpha$.

The elements $(s',\myvol)$ added to $\myR$ correspond to all latency pairs $(s',a')$ such that $s' \in \intco{S,s}$, $a'-s' = a-s$, and $d((s',u),(a',w)) = d((s,u),(a,w))$.
These are therefore the $(s_i,a_i)$ in \latpairset\ with $i<0$.
%This ends the proof that the returned list is maximal such that $s_i \in \intco{S,s}$ for all $i \in \intcc{0,c}$.

Let us now show that $\myvol$ contains the desired value when $(s',\myvol)$ is added to $\myR$.

If the empty list is returned at Line~\ref{alg:prevlist:special2}, then this is true. Otherwise, $\myvol$ is initialized to $(0,0)$ and this value is not changed before the first time
the pair $(s',\myvol)$ is appended to $\myR$.
Therefore the first pair appended to $\myR$ is $(s_{-1}, (0,0))$, which is correct
since $f_{-1} = \sigma((s_{0},u), (a_{-1},w)) = (0,0)$
(since there are no paths from $(s_0, u)$ to $(a_{-1},w)$). 

Assume now that the correct value $(s_i, f_i)$ has been added to $\myR$ at one loop iteration,
and that $s_i > S$ (otherwise, as shown above, a return is triggered just after the append and the function returns).
We then have $s'=s_i$ and the value $\sigma((s',u),(a',w))$ is then added to $\myvol$.
$\myvol$ is therefore now equal to $\sigma((s',u),(a',w)) \plusvol{} f_i  
= \sigma((s_i,u),(a_i,w)) \plusvol{} \sigma((s_{i+1},u), (a_{-1},w))
=f_{i-1}$.
Moreover, the loop will skip latency pairs not in \latpairset{} and the next value of $s'$ that will be considered is $s_{i-1}$.
Therefore the next value that is added to $\myR$ is $(s_{i-1}, f_{i-1})$ and
finally all the correct values are added to $\myR$, which completes the proof.
\end{proof}

We also introduce the function {\em NextList} by replacing in {\em PrevList} of Algorithm~\ref{alg:prevlist}:
{\em $d^-_{s}$} by {\em $d^+_{a}$} in Line~\ref{alg:prevlist:special1}; {\em $s$} by {\em $a$} in Line~\ref{alg:prevlist:special2};
 {\em $s'<s$ in $\myLL$ backwards} by {\em  $a'>a$ in $\myLL$ forwards} in Line~\ref{alg:prevlist:for}; all $(s',\myvol)$ appended to $\myR$ by $(a',\myvol)$;
$d^-_{s'}$ by $d^+_{a'}$ in Line~\ref{alg:prevlist:lineeps:2}; and $(\alpha,\myvol)$ by $(\omega,\myvol)$ in the last line.

The obtained function computes the  list $(a_1 ,g_1), (a_{2}, g_{2}), \ldots, (a_{r+1} = A, g_{r+1})$
with $a_k$ defined by \latpairset{} and with $g_k = \sigma((s_1,u), (a_{k-1},w)) $.

\begin{algorithm}
\Fn{Contribution}{
  \KwIn{a link stream $L = (T,V,E)$, $u \in V$, $w \in V$, $(t,v) \in T\times V$, and the latency list $\myLL$ from $u$ to $w$}
  \KwOut{the contribution $\int_{T\times T} \sigma((i,u),(j,w),(t,v))\divvol{} \sigma((i,u),(j,w)) \diff i \diff j$}
  $\myvol\_tv \gets (0,0)$\\
  \For{$(x,y)$ in $\myLL$ \label{alg:contrib:for}}{
    \If{$t \in \intcc{x,y}$ and $(x,u) \reaches (t,v)$ and $(t,v) \reaches (y,w)$}{
      \If{$d((x,u),(y,w)) = d((x,u),(t,v)) + d((t,v),(y,w))$}{
        $\myvol\_tv \gets VSP(L, (x,u), (t,v)) \timesvol VSP(L, (t,v), (y,w))$
      }
      set $(s,a)$ to $(x,y)$ and exit the loop
    }
  }
  \If{$\myvol\_tv = (0,0)$}{
    \Return{$0$}
  }
  $middle \gets \mbox{VSP}(L, (s,u), (a,w))$ \label{alg:contrib:volsa}\\
  $\myPrev \gets \mbox{PrevList}(L, u, w, s, a, \myLL)$\\
  $\myNext \gets \mbox{NextList}(L, u, w, s, a, \myLL)$\\
  $contrib \gets 0$\\
  $s' \gets s$\\
  \For{$(s\_left, left)$ in $\myPrev$ \label{line:begforprevlist}}{
    $a' \gets a$\\
    \For{$(a\_right, right)$ in $\myNext$}{
      $contrib \gets contrib + (s'-s\_left)(a\_right-a')\cdot \myvol\_tv \divvol (left \plusvol right \plusvol middle)$\\
      $a' \gets a\_right$
    }
    $s' \gets s\_left$ \label{line:endforprevlist}
  }
  \Return{$contrib$}
}
\caption{Contribution of two given nodes to the betweenness of a given temporal node}
\label{alg:contrib}
\end{algorithm}

We finally reach the objective of this section.

\begin{theorem}
\label{theo:contrib}
Given a link stream $L=(T,V,E)$, a temporal node $(t,v)$ in $T\times V$, and two nodes $u$ and $w$ in $V$, Algorithm~\ref{alg:contrib} computes the contribution of $u$ and $w$ to the betweenness of $(t,v)$, \ie\ 
$
C_{tv}(u,w) = \int_{i\in T,j\in T} \Dfrac{\sigma((i,u),(j,w),(t,v))}{ \sigma((i,u),(j,w))} \diff i \diff j. 
$
\end{theorem}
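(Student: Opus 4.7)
The plan is to show that Algorithm~\ref{alg:contrib} is a direct implementation of the discrete-sum formula in Lemma~\ref{lem:contrib:sum}, once one has verified that (i) the latency pair $(s,a)$ it picks is the unique one whose \sfp s may involve $(t,v)$, (ii) the quantity $\myvol\_tv$ it stores equals $\sigma((s,u),(a,w),(t,v))$, and (iii) the outputs of \emph{PrevList} and \emph{NextList} yield the enumeration $s_k$, $a_{k'}$ together with cumulative volumes that reconstruct the denominators of the sum.

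First, I would show that the loop at Line~\ref{alg:contrib:for} correctly identifies the sought latency pair or returns $0$ when none exists. By Lemma~\ref{lem:uniquelatsfp}, at most one latency pair from $u$ to $w$ has non-zero contribution, so any pair whose interval contains $t$ and satisfies the reachability and additive-distance test $d((x,u),(y,w)) = d((x,u),(t,v)) + d((t,v),(y,w))$ must be that pair. Conversely, failure of any of these tests for every latency pair means no \sfp\ from $u$ to $w$ can involve $(t,v)$, so the early return of $0$ is justified.

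Second, I would argue that $\myvol\_tv = \sigma((s,u),(a,w),(t,v))$. Because the additive distance equality holds, every \sfp\ from $(s,u)$ to $(a,w)$ passing through $(t,v)$ decomposes uniquely as a shortest path from $(s,u)$ to $(t,v)$ concatenated with a shortest path from $(t,v)$ to $(a,w)$. By Definition~\ref{def:timesvol}, the volume of the set of such concatenations is the $\timesvol$-product of the two path-set volumes, which is exactly what the two \emph{VSP} calls compute. Then, using Lemma~\ref{lem:contrib:sum}, the contribution becomes the double sum over $(k,k')$ with $-l-1\le k\le -1$ and $0\le k'\le r$ of $(s_{k+1}-s_k)(a_{k'+1}-a_{k'}) \cdot \myvol\_tv \divvol \bigplusvol_{h=k+1}^{k'}\sigma((s_h,u),(a_h,w))$.

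Third, I would invoke Lemma~\ref{lem:prevlist} for \emph{PrevList} and the symmetric statement for \emph{NextList} to guarantee that the two lists enumerate, respectively, $(s_{-1},f_{-1}),\dots,(s_{-l-1},f_{-l-1})$ and $(a_1,g_1),\dots,(a_{r+1},g_{r+1})$, where $f_k$ and $g_{k'}$ are the cumulative $\plusvol$-sums of $\sigma((s_h,u),(a_h,w))$ over $h\in\{k{+}1,\dots,-1\}$ and $h\in\{1,\dots,k'{-}1\}$ respectively. Combined with $middle = \sigma((s,u),(a,w))$ computed at Line~\ref{alg:contrib:volsa}, we obtain $left \plusvol middle \plusvol right = \bigplusvol_{h=k+1}^{k'}\sigma((s_h,u),(a_h,w))$ for the appropriate pair of iterates. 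The bookkeeping $s' \gets s\_left$ at Line~\ref{line:endforprevlist} (and the analogous $a' \gets a\_right$) makes the differences $s' - s\_left$ and $a\_right - a'$ equal to $s_{k+1}-s_k$ and $a_{k'+1}-a_{k'}$ respectively, so each iteration of the double loop adds exactly one summand of the sum in Lemma~\ref{lem:contrib:sum}.

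The main obstacle is the indexing/aggregation accounting for \emph{PrevList} and \emph{NextList}: one must verify that the cumulative volumes they return, once combined with $middle$, reconstruct precisely the volume of $\mySFP((i,u),(j,w))$ on every rectangle $\intoc{s_k,s_{k+1}}\times\intco{a_{k'},a_{k'+1}}$ as prescribed by Lemma~\ref{lem:partition}. Everything else is an unfolding of the definitions and a rewriting of the integral in Lemma~\ref{lem:contrib:sum} as the Riemann sum produced by the nested loop.
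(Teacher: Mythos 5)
Your proposal follows essentially the same route as the paper's proof: identify the unique contributing latency pair $(s,a)$ and the value $\myvol\_tv=\sigma((s,u),(a,w),(t,v))$ via the loop at Line~\ref{alg:contrib:for}, invoke Lemma~\ref{lem:contrib:sum} to turn the integral into a discrete sum, and use Lemma~\ref{lem:prevlist} together with its dual statement for \emph{NextList} to check that $left \plusvol middle \plusvol right$ reconstructs each denominator and that the nested loop produces each summand. That bookkeeping is exactly the paper's argument.

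One step fails as stated, namely your justification of the early return of $0$: ``failure of any of these tests for every latency pair means no \sfp\ from $u$ to $w$ can involve $(t,v)$.'' The loop only ranges over $\myLL$, which by construction contains only latency pairs whose endpoints are event times; instantaneous latency pairs $(t,t)$ with $t$ not an event time are absent from it. So if $t$ is not an event time and $(t,u)\reaches(t,w)$, then $(t,t)$ is a latency pair, and \sfp s from $(t,u)$ to $(t,w)$ may well involve $(t,v)$, yet no pair in $\myLL$ contains $t$ (such a pair would strictly contain the latency pair $(t,t)$, which is impossible), the loop finds nothing, and the algorithm returns $0$. The returned value is still correct, but for a reason your argument does not supply: since $t$ lies in the interior of an interval on which $G_x$ is constant, for every $(i,j)$ with $i\le t\le j$ there are instantaneous paths at all times of a whole subinterval, so the paths pinned at the single instant $t$ form a subset of strictly lower dimension, and by the ratio rule of Definition~\ref{def:divvol} the contribution of every such $(i,j)$ is $0$, whence $C_{tv}(u,w)=0$. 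The paper covers this case with an explicit extra argument in the proof of the theorem; your proof needs the same patch. Apart from this corner case, your reasoning matches the paper's.
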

\begin{proof}
  If there exists a latency pair $(s,a)$ with non-zero contribution,
  then, for any $i\le s$ and any $j\ge a$, we have:
  $\sigma((i,u),(j,w),(t,v)) = \sigma((s,u),(a,w),(t,v))$.
  The {\em for} loop of Line~\ref{alg:contrib:for} computes $\sigma((s,u),(a,w),(t,w))$ and stores it in $\myvol\_tv$.
  Indeed, since $(t,v)$ is involved in a shortest fastest path from $(s,u)$ to $(a,w)$, necessarily $(s,a)$ is such that $t\in \intcc{s,a}$,
  $(s,u) \reaches (t,v)$, $(t,v) \reaches (a,w)$, and $d((s,u),(a,w)) = d((s,u),(t,v)) + d((t,v),(a,w))$.
  Since there is at most one such pair satisfying the first three conditions,
  the algorithm breaks out of the {\em for} loop if one is found.
If no such latency pair is found, $\myvol\_tv$ is equal to $(0,0)$ at the end of the loop
  and the Algorithm returns $0$.
Notice that, in the special case where $t$ is not an event time and $(t,u) \reaches (t,w)$, then the arguments above do not apply, but the algorithm still returns the correct value: $(t,t)$ is a latency pair that does not belong to the latency list, and the contribution of $(u,w)$ is $0$, which is the returned value.

Remember that $\latpairset{} = (s_{-l},a_{-l}), (s_{-l+1}, a_{-l+1})), \dots, (s_0=s, a_0=a) ,\dots, (s_r,a_r)$ and that $s_{-l-1}=S$ and $a_{r+1}= A$.
  {\em PrevList} (Algorithm~\ref{alg:prevlist}) then computes, according to Lemma~\ref{lem:prevlist},
  the  list $\myPrev = ((s_{-1}, f_{-1}), (s_{-2}, f_{-2}), \ldots, (s_{-l-1} = S, f_{-l-1})$
with $s_k$ defined by \latpairset{} and with $f_k = \sigma((s_{k+1},u), (a_{-1},w))$;
its dual algorithm {\em NextList} computes the list
$\myNext = (a_1 ,g_1), (a_{2}, g_{2}), \ldots, (a_{r+1} = A, g_{r+1})$
with $a_k$ defined by \latpairset{} and with $g_k = \sigma((s_1,u), (a_{k-1},w))$.

According to Lemma~\ref{lem:contrib:sum},
$$
C_{tv}(u,w) =
  \sum_{k=-l-1}^{-1} \sum_{k'=0}^r (s_{k+1} - s_k ) \cdot (a_{k'+1} - a_{k'} ) \cdot
  \Dfrac{\sigma((s,u),(a,w),(t,v))}{\plusvol_{h=k+1}^{k'} \sigma((s_h,u),(a_h,w))}.
$$

Lines~\ref{line:begforprevlist} to~\ref{line:endforprevlist} of Algorithm~\ref{alg:contrib} compute this sum.
First notice that $s'$ is initialized to $s = s_0$
and $s\_left$ loops over values in $\myPrev$, starting with $s_{-1}$.
At the end of each iteration $s'$ is set to $s\_left$ and therefore $s'$ and $s\_left$ loop
over all consecutive values $s_{k+1}, s_k$ for $s_k$ in $\myPrev$.
The value of  $left$ is $f_k = \sigma((s_{k+1},u), (a_{-1},w))
=\plusvol_{h=k+1}^{-1} \sigma((s_h,u),(a_h,w))$, as explained in the characterization of \myPrev\ above.

Similarly, in the inner {\em for} loop $a'$ and $a\_right$ loop over all values
$a_{k'}, a_{k'+1}$ for $a_{k'+1}$ in $\myNext$,
and $right = g_{k'+1} = \sigma((s_1,u), (a_{k'},w))
=\plusvol_{h=1}^{k'} \sigma((s_h,u),(a_h,w))$.

The value of $middle$ has been set to  $\sigma((s,u),(a,w))$ (Line~\ref{alg:contrib:volsa}).
Therefore $left \plusvol right \plusvol middle =
\plusvol_{h=k+1}^{k'} \sigma((s_h,u),(a_h,w))$.
Finally, one iteration of the inner loop adds the value
$(s_{k+1} - s_k)(a_{k'+1} - a_{k'}) \sigma((s,u),(a,w)) \divvol \plusvol_{h=k+1}^{k'} \sigma((s_h,u),(a_h,w))$,
which is exactly one term of the sum in Equation~\ref{eq:sumcontrib}, and the loop itself ensures we obtain the whole sum.
\end{proof}

%%%%%%%%%%%%%%%%%%%%%%%%%%%%%%%%%%%%%%%%
\section{Betweenness of a temporal node}
\label{sec:betweenness}

We now have all needed building blocks for computing the betweenness of any given temporal node: we just have to sum the contribution of each node pair, see Algorithm~\ref{alg:bet}.

\begin{algorithm}[!h]
\Fn{Betweenness}{
 \KwIn{a link stream $L=(T,V,E)$ and $(t,v)\in T\times V$}
 \KwOut{the betweenness of $(t,v)$}
 $B \gets 0$\\
 \For{$u \in V$}{
   $\myLL \gets \mbox{Latency-lists}(u)$\\
   \For{$w \in V$}{
     $B \gets B + \mbox{Contribution}(L, u, w, (t,v), \myLL [w])$
   }      
 }
 \Return{B}
}
\caption{Betweenness of a temporal node}
\label{alg:bet}
\end{algorithm}

The key variables describing the size of our algorithm inputs, for a given link stream $L=(T,V,E)$, are the number of nodes $n = |V|$ and the number of link segments $\overline{m}$, {\em i.e.} the number of maximal intervals in $E$. Notice that the number of event times $|\eventtimes|$ is at most $2\cdot \overline{m}$, and so it is in $O(\overline{m})$. Likewise, the number of links $m_t = |\{uv, (t,uv)\in E\}|$ at time $t$, for any $t$, as well as the number of links $m = |\{uv, \exists t, (t,uv)\in E\}|$ in the induced graph, also are in $O(\overline{m})$. Then, the complexity of all algorithms presented in this paper is clearly polynomial in $n$ and $\overline{m}$, which makes Algorithm~\ref{alg:bet} polynomial itself.

We display in Figure~\ref{fig:ex-intro-result} the results obtained in the case of Figure~\ref{fig:ex-intro}, where we computed the betweenness of more than $5000$ temporal nodes in a few seconds. We provide the implementation at \cite{btwurl}.

\begin{figure}[!h]
\centering
\includegraphics[width=.8\textwidth]{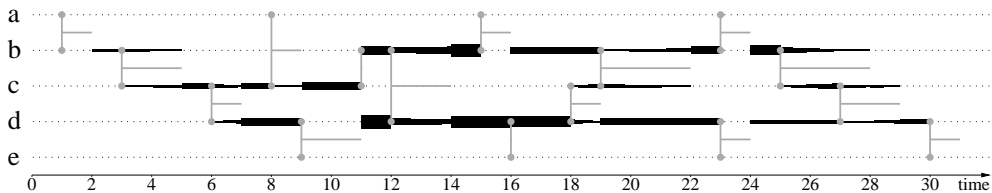}
\caption{
Results of betweenness computations on the example of Figure~\ref{fig:ex-intro}. We computed the betweenness of $(t,v)$ for all $v$ and $t$ equal to $\alpha + i \cdot \frac{\omega - \alpha}{1000}$, for $i = 0..1000$. The obtained value is displayed at $(t,v)$ as a black rectangle of width $\frac{\omega-\alpha}{1000}$ and height propotional to the betweenness of $(t,v)$. Dotted lines represent betweenness values equal to $0$.
}
\label{fig:ex-intro-result}
\end{figure}

%%%%%%%%%%%%%%%%%%%%%%%%%%%%%%%%%%%%%%%%
\section{Related work}
\label{sec:related}

Betweenness computations are first related to {\bf path computations}. Temporal paths already received much attention, in particular optimal path computations according to several criteria (like length, duration, and/or arrival time), see for instance~\cite{BuiXuan2003Computing,Wu2014Path,Wu2016Reachability}. However, most of these works are limited to discrete time and instantaneous links; only few consider continuous time and links with duration~\cite{Whitbeck2012Temporal,Yuan2019Constrained,DBLP:conf/asunam/Simard19}. Then, the focus is on finding optimal paths or computing distances and latencies \cite{DBLP:conf/asunam/Simard19}, not counting them as we do here. The authors of~\cite{AfrasiabiRad2017Computation} notice that the number of foremost paths (temporal paths with minimal arrival time) may be exponential. The problem that we consider here is quite different because we handle continuous time and links with duration. This leads to the concept of finite volumes of uncountable path sets, that never appeared in previous literature, up to our knowledge.

The graph betweenness itself also has been studied in temporal settings. A first line of study focuses on {\bf updating betweenness values} upon link arrival or departure, see for instance~\cite{Bergami2017Faster,Green2012Fast}. This is quite different from our work: the considered paths are classical (static) graph paths, and the considered betweenness is the classical one, at each time instant.

Several works consider {\bf temporal betweenness extensions} that rely on various kinds of optimal (fastest, shortests, foremost, etc) paths. Most have a node-centric view: they define a value for each node, not for each temporal node, see for instance \cite{Habiba2011Betweenness,Tang2010Analyzing,Nicosia2013Graph,Williams2016Spatio,Kim2012Temporal}.
Others define a value for each temporal node, like in our case. For instance, \cite{Takaguchi2015Coverage} proposes {\em coverage centrality} of $(t,v)$, defined as the fraction of pairs of (non-temporal) nodes for which there exists a fastest path involving $(t,v)$. Bu\ss{} et al.~\cite{Buss2020Algorithmic} consider instantaneous links and define betweenness centralities for various types of optimal temporal paths.
The authors of~\cite{Tang2010Analyzing,Gunturi2017Scalable} define a betweenness value for each temporal node, based on foremost paths or other optimal paths. The algorithm in \cite{Gunturi2017Scalable} starts by identifying time instants for which foremost path trees are stable, which is related to our latency pairs. In \cite{Tsalouchiou2019Temporal}, the authors combine the length and duration of paths using a tunable parameter, and focus on instantaneous links.

All these works assume discrete time, which implies finite sets of shortest paths. Instead, we consider continuous time, leading to uncountable sets of paths, with finite volume. In addition, these works keep a partly node-centric point of view by considering paths between nodes; we push the integration of temporal aspects further by considering paths between temporal nodes. This makes an important difference, since the node-centric view misses locally-optimal paths: they only count paths with a given duration or length between pairs of nodes (for any starting and arrival times), whereas our approach combines a variety of locally shortest fastest paths, with different durations and lengths. This raises different algorithmic challenges, like the computation of latency lists and the selection of appropriate contributing latency pairs.

{\bf Closer to our work}, \cite{AfrasiabiRad2017Computation} and \cite{Pereira2016Evolving} consider optimal paths within time slices, thus obtaining a betweenness value for each node for each time slice. Again, they only consider discrete time, and only a limited number of source and target temporal nodes.

Finally, the generalized betweenness that we consider in this paper, by dealing with continuous time, links with or without duration, as well as paths between all pairs of temporal nodes, raises original algorithmic questions that are not present in previous literature.

%\input{related_cm.tex}

%%%%%%%%%%%%%%%%%%%%%%%%%%%%%%%%%%%%%%%%
\section{Conclusion}

We presented the first algorithms to compute betweenness centrality of temporal nodes in link streams. To obtain these algorithms, we identified and addressed several original challenges, like the definition and computation of volumes of infinite sets of paths, the computation of all latency pairs from any node to all others, or the transformation of continuous-time integrals into discrete sums over finite numbers of time intervals. Each of these building blocks has its own interest, in particular the computation of shortest path volumes from a given temporal node. The complexity of obtained algorithms is polynomial in time and space, and we provide an implementation in python \cite{btwurl}.

%However, this is a first step only. Indeed, many directions remain to explore.

Our algorithm leaves room for complexity improvement. In particular, it seems promising to explore extensions to link streams of approaches like Brandes' for betweenness on graphs \cite{Brandes2001Faster}. Another important direction is to design algorithms to compute the betweeness of {\em all} temporal nodes rather than just one: iterating our algorithm over many temporal nodes leads to much redundancy. However, keep in mind that there is an infinite number of temporal nodes; one may then try to infer the betweenness of any of them from the betweenness of a finite number of them, for instance each node at each event time. This seems non-trivial, though, and an open question.

Going further, one may try to design approximate algorithms. Indeed, the best known time complexity of betweenness computations in graphs is $O(n m)$ \cite{Brandes2001Faster} and it cannot be lower in link streams, since graphs are special cases \cite{DBLP:journals/snam/LatapyVM18}. This is prohibitive in many practical cases, leading to much work on approximate computations, that typically compute shortest paths from some nodes only \cite{10.1145/2556195.2556224,DBLP:conf/ipps/GrintenM20}. Such approaches are very relevant in link streams too, where the contribution of only a few node pairs may give reasonably accurate approximates, at a much lower cost than exact computations. This remains to explore, though.

An even more challenging direction is to embrace the streaming nature of link streams, and design on-line and/or streaming algorithms for betweenness. Such algorithms do not store the data in memory; they compute results on-the-fly and output them as soon as they are available. They would be of high theoretical and practical interest, but they raise many challenges.

Another interesting family of perspectives consists in extending or restricting the considered input. In particular, one may consider stream graphs instead of link streams: in stream graphs, nodes are not always present, leading to more subtle path, distance, and latency concepts \cite{DBLP:journals/snam/LatapyVM18}. We considered here streams with link (and node) presence times equal to unions of disjoint closed intervals (including singletons); another extension would be to consider more general cases, like for instance unions of disjoint closed or open intervals. Also, weighted and/or directed stream graphs and link streams \cite{Latapy2019} lead to more complex concepts of shortest fastest paths, and our definitions of volumes may be extended to these cases. Conversely, one may consider more specific situations, like discrete time streams, or link stream with instantaneous links only. Such cases often appear in practice, and it may be possible to design more efficient algorithms for them.

Extending our algorithms to variants of the betweenness concept itself also is an interesting perspective. One may for instance consider betweenness of {\em links} rather nodes, or consider paths of other kinds than shortest fastest ones, {\em e.g.} foremost ones \cite{DBLP:journals/snam/LatapyVM18}

Finally, this paper opens the perspective of practical uses of betweenness in link streams, since until now only the definition was available. It is now possible to explore how betweenness is distributed in (small scale) real-world cases, and gain insight from this. It may also be used to extend important graph algorithms to link streams, like the computation of communities by iteratively removing temporal nodes of highest betweenness, in a way similar to \cite{Girvan2002Community} that iteratively removes links of highest betweenness.

\medskip
\noindent
{\bf Acknowledgements.}
This work is funded in part by the ANR (French National Agency of Research) under the Limass project (ANR-19-CE23-0010) and the FiT LabCom grant.

\bibliographystyle{plain}
\bibliography{biblio}

\end{document}